\newcommand{\textin}{\operatorname{in}}
\newcommand{\textout}{\operatorname{out}}
\newcommand{\vol}{\operatorname{vol}}
\newcommand{\textdeg}{\operatorname{deg}}
\renewcommand{\Pr}{\operatorname{Pr}}
\newcommand{\Ex}{\mathbb{E}}
\newcommand{\indicator}{\mathds{1}} 
\newcommand{\cA}{\mathcal{A}}
\declaretheorem{property} 
\declaretheorem{observation} 
\title{Engineering Nearly Linear-Time Algorithms for Small Vertex
  Connectivity} 
\author{Max Franck}{Department of Computer Science, Aalto University,
  Espoo, Finland}{max.franck@aalto.fi}{https://orcid.org/0000-0003-3583-8033}{}
\author{Sorrachai Yingchareonthawornchai}{Department of Computer Science, Aalto University, Espoo, Finland}{sorrachai.yingchareonthawornchai@aalto.fi}{https://orcid.org/0000-0002-7169-0163}{}
\authorrunning{M. Franck and S. Yingchareonthawornchai} 
\keywords{Algorithm Engineering; Algorithmic Graph Theory; Sublinear Algorithms} 
\begin{document}

\maketitle

\begin{abstract}
Vertex connectivity is a well-studied concept in graph theory with numerous applications. A graph is $k$-connected if it remains connected after removing any $k-1$ vertices. The vertex connectivity of a graph is the maximum $k$ such that the graph is $k$-connected. There is a long history of algorithmic development for efficiently computing vertex connectivity. Recently, two near linear-time algorithms for small $k$ were introduced by [Forster et al. SODA 2020]. Prior to that, the best known algorithm was one by [Henzinger et al. FOCS'96] with quadratic running time when $k$ is small.

In this paper, we study the practical performance of the algorithms by
Forster et al. In addition, we introduce a new heuristic on a key
subroutine called local cut detection, which we call degree counting.
We prove that the new heuristic improves space-efficiency (which can be good for caching purposes) 
and allows the subroutine to terminate earlier. 
According to experimental results on random graphs
with planted vertex cuts, random hyperbolic graphs, and real world
graphs with vertex connectivity between 4 and 15, the degree counting
heuristic offers a factor of 2-4 speedup over the original
non-degree counting version for most of our data. It also outperforms the previous
state-of-the-art algorithm by Henzinger et al. even on relatively small graphs.
\end{abstract}

\section{Introduction}

Given an undirected graph, the \textit{vertex connectivity problem} is to compute the minimum size of
a vertex set $S$ such that after removing $S$, the remaining graph is
disconnected or a singleton.  Such a vertex-set is called
a \textit{minimum vertex cut}.
 Vertex connectivity is well-studied concept in graph theory with
applications in many fields. For example, for network reliability \cite{GoldschmidtJL94,LiuCL00}, a
minimum vertex-cut has the highest chance to disconnect the network
assuming each node fails independently with the same probability; in sociology, vertex connectivity of a
social network measures social cohesion \cite{SocialCohesion}.

There is a long history of algorithmic development for efficiently computing
vertex connectivity (see \cite{Nanongkai} for more elaborated
discussion of algorithmic development).
Let $n$ and $m$ be the number of vertices and edges respectively in
the input graph. The time complexity for computing vertex connectivity has been
$O(n^2)$ since 1970 \cite{Kleitman1969methods} even for the special
case where the connectivity is a constant until very recently, when
\cite{Forster} introduced randomized (Monte Carlo)\footnote{With at most $\frac{1}{n^c}$ error rate for any constant $c$.}
algorithms to compute vertex connectivity in time $O(m+ n\kappa^3\log^2n)$
(for undirected graphs) where $\kappa$ is the
vertex connectivity of the graph. The algorithm follows the framework
by \cite{Nanongkai}.  This makes progress
toward the conjecture (when $\kappa$ is a constant) by Aho, Hopcroft and Ullman \cite{ahoHU74} (Problem
5.30) that there exists a linear time algorithm for
computing vertex connectivity.  Before that, the state-of-the-art
algorithm was due to \cite{HRG}, which runs in time $O(n^2\kappa\log n)$.

In this paper, we study the practical performance of the near-linear time
algorithms by \cite{Forster} for small vertex connectivity. We briefly describe their framework and point out the potential
improvement of the framework. \cite{Nanongkai}  provide a fast reduction from
vertex connectivity to a subroutine called \textit{local vertex-cut
  detection}.  Roughly speaking, the framework deals with two extreme
cases: detecting balanced cuts and unbalanced cuts.  The balanced cuts can be
detected using (multiple calls to) a standard $st$-max flow algorithm; the unbalanced cuts
can be detected using (multiple calls to) local vertex-cut
detection. Reference \cite{Forster} follow the
same framework and observe that local vertex-cut detection can be further reduced to
another subroutine called \textit{local edge-cut detection} as well as provide fast edge cut detection
algorithms that finally prove the near-linear time vertex connectivity 
algorithm for any constant $\kappa$. The full algorithm is discussed in \Cref{appendix:fullocalvc}. From our internal testing, we observe
that, overall the framework, the performance bottleneck is on the
local edge detection algorithm.

Therefore, our focus is on speeding up the local edge-cut detection
algorithm. To define the problem precisely, we  first set up notations. Let $G =
(V,E)$ be a directed graph. Let $E(S,T)$ be the set of edges from vertex-set $S$
to vertex-set $T$. For any vertex-set $S$, let
$\vol^{\textout}(S) := \sum_{v \in S}\textdeg^{\textout}(v)$ denote
the volume of $S$ which is total number of edges originating in $S$.
Undirected edges are treated as one directed edge in each direction.
We now define the interface of the local edge-cut
detection algorithm.

\begin{definition} \label{def:localec}
An algorithm $\mathcal{A}$ is LocalEC if it takes as input a vertex $x$ of a
graph $G = (V,E)$, and two parameters $\nu, k$ such that $\nu k = O(|E|)$, and output in the following manner:
\begin{itemize}
  \item either output a vertex-set $S$ such
  that $x \in S$ and $|E(S,V \setminus S)| < k$  or,
  \item the symbol $\bot$ certifying that there is no non-empty
    vertex-set $S$ such that
    \begin{align} \label{eq:local cut exists}
     x \in S, \vol^{\textout}(S) \leq \nu, \mbox{ and } |E(S, V \setminus S)|
      <k.
    \end{align}
  \end{itemize}
  The algorithm is allowed to have bounded
one-sided error in the following sense. If there is a non-empty vertex-set $S$
satisfying \Cref{eq:local cut exists}
then  $\bot$ is returned with probability at most $1/2$.
\end{definition}

Reference \cite{Forster} introduced two LocalEC algorithms with the running time
$O(\nu k^2)$. The algorithms are very simple: they use repeated DFS
(depth-first search) with different conditions for early termination. We note
that this running time is enough to get a near-linear time algorithm
for small connectivity  using the framework by \cite{Nanongkai}.

\textbf{Our Results and Contribution.}  We introduce a heuristic called \textit{degree
  counting} that is applicable to both variants of LocalEC in
\cite{Forster}, which we call Local1+ and Local2+. We prove that the degree counting heuristic version is more
space-efficient in terms of \textit{edge-query} complexity and
\textit{vertex-query} complexity. Edge-query complexity is defined as the number
of edges that the algorithm accesses, and vertex-query
complexity is defined as the number of vertices that the 
algorithm accesses. The results are shown in Table \ref{table:new
 results}. These complexity measures can be relevant in practice. For
example, an algorithm with low query complexity may be able to store
the accessed data in a smaller cache than an algorithm with high query complexity.

\begin{table}[!h]
\caption{Comparisons among various implementation of LocalEC
  algorithms. Local1+ denotes Local1 with the degree counting heuristic. Similarly, Local2+ denotes Local2 with the degree counting heuristic. }
\label{table:new results}
\centering
\begin{tabular}{|c|c|c|c|c|}
\hline
LocalEC Variants & Time         & Edge-query   & Vertex-query & \multicolumn{1}{l|}{Reference} \\ \hline
Local1           & $O(\nu k^2)$ & $O(\nu k^2)$ & $O(\nu k^2)$ & \cite{Forster}                         \\ \hline
Local1+          & $O(\nu k^2)$ & $O(\nu k^2)$ & $O(\nu k)$   & This paper                         \\ \hline
Local2           & $O(\nu k^2)$ & $O(\nu k)$   & $O(\nu k)$   & \cite{Forster}                     \\ \hline
Local2+          & $O(\nu k^2)$ & $O(\nu k)$   & $O(\nu)$     & This paper                     \\ \hline
\end{tabular}
\end{table}

We conducted experiments on three types of undirected graphs: (1) graphs with planted cuts where we have control over size and volume of the cuts, and (2) random hyperbolic graphs, and (3) real-world
networks. We denote LOCAL1, LOCAL1+, and LOCAL2+ to be the same
local-search based vertex connectivity algorithm \cite{Forster} (see
\Cref{appendix:fullocalvc} for details) except that the unbalanced part is implemented with different LocalEC
algorithms using Local1, Local1+, Local2+, respectively. We use Local1 as a baseline for LocalEC algorithms. We denote HRG to be the preflow-push-relabel-based algorithm by \cite{HRG}. We implement
HRG as a baseline because when $k$ is small (say
$k = O(1)$) HRG is the fastest known alternative to \cite{Forster,
  Nanongkai}. The implementation detail can be found in
\Cref{appendix:hrg}. By sparsification algorithm \cite{sparsification}, we can assume that the input graph size depends on $n$ and $k$. The
following summarize the key finding of our empirical studies.

\begin{enumerate}
\item \textbf{Internal Comparisons (\Cref{sec:effectiveness}).} We compare three LocalEC algorithms (Local1, Local1+, Local2+). According to the experiments (\Cref{edgesovervk}), for any $\nu$
parameter, Local1+ and Local2+ visit \textit{significantly} fewer edges than Local1. Also, Local2+ visits slightly fewer edges than Local1+ overall. The degree counting is also very effective
at low volume parameter. When plugging into full vertex connectivity algorithms, the degree counting heuristics (LOCAL1+ and LOCAL2+) improve the performance over non-degree counting counter part
(LOCAL1) by a factor 2 to 4 for most data used in our experiments, although for some larger graphs the speedup was noticeably larger. The greatest observed speedup over LOCAL1 is 18.4x for LOCAL2+ at
$n=100000$, $\kappa_G=16$. For graphs of this size, LOCAL2+ performs slightly better than LOCAL1+. Finally, according to CPU sampling, the local search is the main bottleneck for the performance
of LOCAL1 at roughly at least $90\%$ for large instances. On the other hand, for the degree counting versions (LOCAL1+ and LOCAL2+), the CPU usage of local search part is improved to be almost the same
as the other main component (i.e., finding a balanced cut using the Ford-Fulkerson's max-flow algorithm).
\item \textbf{Comparisons to HRG.} We compare four vertex connectivity algorithms, namely HRG, LOCAL1, LOCAL1+, LOCAL2+. For planted cuts (\Cref{sec:planted cuts}), LOCAL1, LOCAL1+, and LOCAL2+ scale
with $n$ much better than HRG when $\kappa_G$ is fixed. In particular, LOCAL1+ and LOCAL2+ start to outperform HRG on graphs as small as $n \leq 500$ (when $\kappa \leq 15$).
 For random hyperbolic graphs (\Cref{sec:random hyperbolic graphs}), HRG performs much better than on the planted cut instances, but is still
outperformed relatively early. In particular, LOCAL1+ and LOCAL2+ outperform HRG for $n \geq 5000$ when $\kappa \leq 12.$ In real-world graphs (\Cref{sec:real world}), LOCAL1+ and LOCAL2+ are the
fastest among the four algorithms with LOCAL2+ being slightly faster than LOCAL1+. We also observe that the performance of all four algorithms is very similar on part of the real world dataset and
graphs with planted cuts with the same size and vertex connectivity.
\end{enumerate}

\textbf{Organization.} We discuss related work in \Cref{sec:related}, and preliminaries in \Cref{sec:prelim}. Then, we review two variants of LocalEC algorithms (Local1,Local2) \cite{Forster}, and
describe new degree counting heuristic versions (Local1+, Local2+) in \Cref{sec:localec}. Then, all the experimental results are discussed in \Cref{sec:experiment}. We conclude and discuss future work
in \Cref{sec:conclusion}.

\section{Related Work} \label{sec:related}

\textbf{Fast Vertex Connectivity Algorithms.}  We consider a decision
version where the problem is to decide if $G$ has a vertex cut of
size at most $k-1$ (the general vertex connectivity can be solved
using a binary search on $k$).  We highlight only recent
state-of-the-art algorithms. For more elaborated discussion, see
\cite{Nanongkai}. When $k = O(1)$, the fastest known algorithm is by
\cite{Forster} with running time $O(m+ nk^3 \log^2n)$. The algorithm
is based on local search approach. For larger $k$, the fastest known
algorithm are based  on preflow-push-relabel by \cite{HRG} with the running time $O(n^2k
 \log n)$, and based on algebraic techniques by \cite{LinialLW88} with the running time $O(n^{\omega} \log ^2n + k^{\omega}n
\log n)$ where $\omega$ denotes the matrix multiplication exponent,
currently $\omega \leq 2.37286$ \cite{abs-2010-05846}.  When $k$ is small (say
$k = O(1)$), the preflow-push-relabel-based algorithm  by \cite{HRG}
is the fastest alternative to \cite{Forster, Nanongkai}. Therefore, we
implement the preflow-push-relabel-based algorithm \cite{HRG} as a baseline for performance comparisons.  We note
both all aforementioned algorithms are randomized. Deterministic
algorithms are much slower than the randomized ones. The fastest known
deterministic algorithms are by \cite{Gabow06} for large $k$ and by
\cite{abs-1910-07950} for $k= O(1)$. 

\textbf{Deciding $(k,s,t)$-Vertex Connectivity.}   We mention another related
problem which is to decide if the there is a vertex cut separating $s$ and
$t$ of size at most $k-1$. By a standard reduction \cite{Splitgraph}, it can be
solved by $st$-maximum flow.  $st$-maximum flow can be solved in time
$O(mk)$ by augmenting paths algorithm by Ford-Fulkerson algorithm \cite{FordFulkerson}. For larger
$k$, a simple blocking flow algorithm by \cite{Dinitz06} runs in time
$O(m\sqrt{n})$.  The current state-of-the art algorithms are
$O(m^{4/3+o(1)})$-time algorithm by \cite{abs-2003-08929}, and $\tilde
O(m+n^{1.5})$-time\footnote{$\tilde O(f(n)) = O(\text{poly}(\log
  n)f(n)).$} algorithm by \cite{BrandLNPSS0W20}. Note that when $k$ is
small (e.g., $k = O(1)$), then Ford-Fulkerson algorithm \cite{FordFulkerson} is the fastest, and we thus
implement Ford-Fulkerson algorithm  as a subroutine to find vertex cut
for the balanced case.

\textbf{Local Search.}  There are quite a few local search algorithm
with different running time.  The first LocalEC algorithm by \cite{Chechik} has
running time of $O(\nu k^k)$. \cite{Forster} introduced a new local
search algorithm with improved time $O(\nu k^2)$.  \cite{Forster} also
provide a reduction to local vertex cut detection problem, which we called LocalVC (similar to \Cref{def:localec}, but uses
vertex cut instead of edge cut).  Therefore, there is a LocalVC
algorithm with running time $O(\nu k^2)$. This improved the previous
bound for LocalVC with running time $O(\nu^{1.5} k)$ by
\cite{Nanongkai} when $k$ is small.   For our purpose, when $k$ is
small (say $k = O(1)$), the algorithm by \cite{Forster} is the
fastest, and thus we consider the LocalEC algorithm by
\cite{Forster}.

\textbf{Implementation and Experimental Studies.}   To the best of our
knowledge, this paper is the first experimental study on vertex
connectivity algorithms; there were no prior experimental studies on vertex
connectivity algorithms\footnote{The experimental work by
  \cite{Rigat12} mentioned $k$-vertex connectivity problem. However, in the
  experiment, they studied only the algorithm for deciding $(k,s,t)$-vertex connectivity where the source $s$ and sink $t$ are given as inputs.}.   This is in stark contrast to the
edge-connectivity problem (which is considered as a sibling problem)
where we compute the minimum number of edges to be removed to
disconnect the graph. For edge-connectivity, there are many
experimental studies
\cite{JungerR00,EdgeCutAlgos,PadbergR90,HenzingerNSS18}. More
recently, the work by \cite{ExerimentDirectedEdgeConn} implemented  the local search
framework in \cite{Forster} to compute directed edge-connectivity.   


\section{Preliminaries} \label{sec:prelim}

Let $G = (V,E)$ be an undirected graph.  In general, we denote $m = |E|$ and $n = |V|$. We denote $E(S,T)$ be the set
of edges from vertex-set $S$ to vertex-set $T$.  We say that $S \subset V$ is a
\textit{vertex cut} if $G - S$ (the graph after removing $S$ from $G$) is disconnected.
If no vertex cut of size $k$ exists, the graph is k-(vertex)-connected.
We say that $S$ is an
$xy$-vertex cut if $x$ cannot reach $y$ in $G - S$. Let $\kappa_G$ be
vertex connectivity of $G$, i.e., the size of the minimum vertex-cut
(or $n-1$ if no cut exists). Let $\kappa_G(x,y)$ denote the size of the minimum
$xy$-vertex cut in $G$ or $n-1$ if the $xy$-vertex cut does not
exist. We say that a triplet $(L,S,R)$ is a \textit{separation triple}
if $L, S$ and $R$ form a partition of $V$, $L$ and $R$ are not
$\emptyset$ and $E(L,R) = \emptyset$.  In this case, $S$ is a vertex-cut
in $G$. The decision problem for vertex connectivity which we call
$k$-connectivity problem is the following:
Given $G = (V,E)$, and integer $k$, decide if $G$ is $k$-connected,
and if not, output a vertex-cut of size $< k$.  

\textbf{Sparsification.} For an undirected graph $G = (V,E)$, the algorithm by
Nagamochi and Ibaraki \cite{sparsification} runs in $O(m)$ time and partitions $E$ into a sequence of forests $E_1,
\ldots, E_n$ (possibly $E_i = E_{i+1} = \ldots = E_n = \emptyset$ for
some $i$).  For each $k \leq n$,  the subgraph $FG_k := (V,
\bigcup_{i \leq k} E_i)$ has the property that $FG_k$ is $k$-connected
if and only if $G$ is $k$-connected. 
Moreover, any vertex cut of size $< k$ in $FG_k$ is also a vertex cut in $G$. Clearly, $|E(FG_k)| \leq nk$. 

From now, with preprocessing in $O(m)$ time, we assume that the input graph to the $k$-connectivity
problem is $FG_{k}$. In particular, we can assume that the number of
edges is $O(nk)$. We can also assume that the minimum degree is
at least $k$ (because otherwise we can output the neighbor of the
vertex with minimum degree).

\textbf{Split Graph.} The split graph construct is a standard reduction from vertex connectivity
based problems to edge connectivity based problems, used in the algorithms featured in this
paper, among others \cite{Splitgraph, Forster, HRG}. Given graph $G$, we define the split graph $SG$ as follows.
For each vertex $v$ in $G$, we replace $v$ with an ``in-vertex'' $v_{\textin}$ and an ``out-vertex'' $v_{\textout}$,
and add an edge from $v_{\textin}$ and $v_{\textout}$. The reduction follows from the observation that
edge-disjoint paths in $SG$ that start at an outvertex and end at an invertex correspond to (non-endpoint)
vertex-disjoint paths in $G$. For each edge $(u,v)$ in $G$, we add an edge from $(v_{\textin},u_{\textout})$ in $SG$.

\section{LocalEC Algorithms and Degree Counting Heuristics} \label{sec:localec}

In this section, we review two variants of LocalEC algorithms by
\cite{Forster}, and describe their corresponding new version using the
degree counting heuristic.
For completeness, we describe the complete vertex connectivity algorithm by
\cite{Forster} and some implementation details in \Cref{appendix:fullocalvc}.  
All the algorithms in this section follow a common
framework called \textsc{AbstractLocalEC} as described in \Cref{alg:abstract localec}.
Let $G = (V,E)$ be the graph that we work on. The algorithm takes as inputs $x \in V$
and two integers $\nu, k$. The basic idea is to apply
Depth-first Search (DFS) on the starting vertex $x$ but force early termination. We repeat
for $k$ iterations. If DFS terminates normally at some iteration, i.e., without having to apply the early
termination condition, then the set of reachable vertices satisfy \Cref{eq:local cut exists}.
Otherwise, we certify that no cut satisfying \Cref{eq:local cut exists} exists. The only
main difference is at line \ref{AbstractLocalEC:line1} where we need to specify the condition for early termination
and selection of the vertex $y \in V(T)$ in such a way that the entire algorithm outputs correctly
with constant probability. If the minimum degree is less than $k$, we set $k$ to the minimum degree
and return the trivial cut if no smaller cut is found.

\begin{algorithm}
  \begin{itemize}
  \item  \textbf{repeat} for $k$ times:
    \begin{enumerate}
      \item Grow a DFS tree $T$ starting from $x$, stopping early at some point to obtain $y \in V(T)$. \label{AbstractLocalEC:line1}
      \item If the DFS terminates normally, then \textbf{return} $V(T)$.  
      \item Reverse all edges along the unique path from $x$ to $y$ in the tree $T$, unless this is the last iteration.
      \end{enumerate}
      \item \textbf{return} $\bot$.
  \end{itemize}
  \caption{\textsc{AbstractLocalEC}$_G(x,\nu,k)$}
  \label{alg:abstract localec}
\end{algorithm}

Next, we define time and space complexity (in terms of edges and
vertices required to run the algorithm) of a LocalEC algorithm. 
\begin{definition}
  Let $\mathcal{A}(x,\nu,k)$ be a LocalEC algorithm. $\mathcal{A}$ has
  $(t,s_e,s_v)$-complexity if $\mathcal{A}$ terminates in $O(t)$ time and accesses at
  most $O(s_e)$ distinct edges, and at most $O(s_v)$ distinct vertices. 
\end{definition}

\subsection{Local1 and Degree Counting Version}

\textbf{Algorithm for Local1.} Replace line \ref{AbstractLocalEC:line1} in \Cref{alg:abstract localec}
with the following process. Grow a DFS tree starting on vertex
$x$ and stop when the number of accessed edges is exactly $8\nu
k$. Let $E'$ be the set of accessed edges. We sample an edge $(u,v) \in E'$ uniformly at random.
Finally, we set $y \gets u$. If we sample $(u,v)$ to be the $\tau$-th edge visited, we can stop the
DFS early after that edge (similarly to Local1+ below). 

\begin{theorem} [Theorem A.1 in \cite{Forster}] \label{thm:local1}
\textsc{Local1}$(x,\nu,k)$ is \textsc{LocalEC} with $(\nu k^2,\nu k^2, \nu k^2)$-complexity.  
\end{theorem}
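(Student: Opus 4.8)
The plan is to check the three clauses of \Cref{def:localec} for \textsc{Local1} together with the stated $(\nu k^2,\nu k^2,\nu k^2)$-complexity, in the order: complexity, soundness of a non-$\bot$ answer, one-sided error. The complexity is the quick part: each of the $k$ iterations runs a DFS that halts the instant $8\nu k$ edges have been scanned, so it accesses at most $8\nu k$ edges and hence at most $O(\nu k)$ vertices (the endpoints of scanned edges, together with the $O(\nu k)$ vertices on the reversed tree path), and the path reversal and the uniform sample each cost $O(\nu k)$; summing over the $k$ iterations gives $O(\nu k^2)$ in all three measures. If the reachable part is too small for $8\nu k$ edges to be scanned, the DFS always terminates normally and the bounds only improve.

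The single tool I will use repeatedly is a \emph{reversal lemma}: fix a vertex set $S$ with $x\in S$ and let $P$ be a path in some graph $H$ from $x$ to a vertex $z$; if $H'$ is $H$ with every edge of $P$ reversed, then a telescoping/parity argument on how $P$ alternates between $S$ and $V\setminus S$ shows that $|E_{H'}(S,V\setminus S)|$ equals $|E_H(S,V\setminus S)|$ when $z\in S$ and $|E_H(S,V\setminus S)|-1$ when $z\notin S$, and moreover $\vol^{\textout}_{H'}(S)\le\vol^{\textout}_{H}(S)$. For soundness of a non-$\bot$ output, write $G_1=G$ and let $G_{j+1}$ be $G_j$ with the iteration-$j$ tree path (from $x$ to the sampled tail $y_j$) reversed, so iteration $i$'s DFS runs in $G_i$. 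If iteration $i$ returns $V(T)$, its DFS in $G_i$ terminated normally, so $|E_{G_i}(V(T),V\setminus V(T))|=0$; applying the reversal lemma to the $i-1\le k-1$ reversals with $S:=V(T)$ and unwinding back to $G$ gives $|E_G(V(T),V\setminus V(T))|\le i-1<k$, and since $x\in V(T)$ this is a legal output.

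For the one-sided error, assume a non-empty $S^\ast$ with $x\in S^\ast$, $\vol^{\textout}(S^\ast)\le\nu$ and $c:=|E_G(S^\ast,V\setminus S^\ast)|<k$ exists; the task is to show $\Pr[\bot]\le 1/2$. Put $\Phi_i:=|E_{G_i}(S^\ast,V\setminus S^\ast)|$, so $\Phi_1=c$, and by the reversal lemma $\Phi$ is non-increasing and drops by exactly one in iteration $j$ iff $y_j\notin S^\ast$. Call iteration $i$ \emph{bad} if its DFS does not terminate normally, so it scans exactly $8\nu k$ edges; since $\vol^{\textout}_{G_i}(S^\ast)\le\vol^{\textout}_{G}(S^\ast)\le\nu<8\nu k$, a bad iteration must have $\Phi_i\ge 1$, because otherwise the DFS cannot leave $S^\ast$ and terminates normally. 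The heart of the argument is the next step: in a bad iteration at most $\vol^{\textout}_{G_i}(S^\ast)\le\nu$ of the $8\nu k$ scanned edges have their tail inside $S^\ast$, so the uniformly sampled edge has its tail $y_i$ outside $S^\ast$ — hence decrements $\Phi$ — with probability at least $1-\nu/(8\nu k)=1-1/(8k)$; the right way to use this is a \emph{per-iteration failure probability of $1/(8k)$ handled by a union bound}, not a constant per-iteration success probability (which would give a useless bound). Let $B_i$ be the event that iteration $i$ is bad but its sample fails to decrement $\Phi$, so $\Pr[B_i]\le 1/(8k)$. If \textsc{Local1} returns $\bot$ then all $k$ iterations are bad; if in addition no $B_i$ occurred for $i\le k-1$, then each of iterations $1,\dots,k-1$ decrements $\Phi$, forcing $\Phi_i=0$ for some $i\le k$, which makes that iteration terminate normally and contradicts its being bad. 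Hence $\bot$ implies that some $B_i$ with $i\le k-1$ occurs, and a union bound gives $\Pr[\bot]\le (k-1)/(8k)<1/2$. The main obstacles I anticipate are making the reversal lemma (including the $\vol^{\textout}$ monotonicity) precise, and phrasing the union-bound argument so that the conditioning on ``iteration $i$ is bad'' is handled cleanly.
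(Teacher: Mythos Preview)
Your proposal is correct and follows essentially the same approach as the paper: the complexity bound is the same one-line counting argument, and both the soundness and one-sided-error arguments rest on the same reversal lemma (the paper's \Cref{obs:dfs}) applied to $V(T)$ and to the target set $S^\ast$ respectively, with the key per-iteration estimate $\Pr[y_i\in S^\ast]\le \vol^{\textout}_i(S^\ast)/(8\nu k)\le 1/(8k)$. The only cosmetic difference is that the paper bounds $\Pr[\bot]$ via Markov's inequality on $Y=\sum_{i\le k-1}\indicator[y_i\in S^\ast]$ while you use a union bound on the events $B_i$; for sums of indicators these are the same inequality, so the arguments are identical in substance.
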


Next, we present the degree counting version of Local1, which we call Local1+.

\textbf{Algorithm for Local1+.} Replace line \ref{AbstractLocalEC:line1} in \Cref{alg:abstract
  localec} with the following process. Let $\tau$ be a random integer
in the range $[1, 8\nu k]$. If this is in the last iteration, we set
$\tau \gets 8\nu k$. Then, we grow a DFS tree $T$ starting on vertex
$x$. At any time step, let $V(T)$ be the set of vertices visited by
the DFS so far. We stop as soon as $\vol^{\textout}(V(T)) \geq
\tau$. Finally, we set $y$ to be the last vertex that the DFS visited.

\begin{theorem} \label{thm:local1+}
\textsc{Local1+}$(x,\nu,k)$ is \textsc{LocalEC} with $(\nu
k^2, \nu k^2, \nu k)$-complexity.  
\end{theorem}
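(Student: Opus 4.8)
The plan is to mirror the proof of \Cref{thm:local1} (Theorem~A.1 of \cite{Forster}) and to isolate the single place where that argument depends on the rule used to pick $y$ in line~\ref{AbstractLocalEC:line1}; everything else carries over unchanged. Two deterministic facts about the path-reversal mechanism are all that is needed from \cite{Forster}, and neither refers to how $y$ is chosen: (i) in any iteration in which the DFS terminates normally, the reachable set $V(T)$ satisfies $x\in V(T)$ and $|E(V(T),V\setminus V(T))|<k$ in $G$, so $V(T)$ is a legal \textsc{LocalEC} output (and since the only other possible output is $\bot$, the interface of \Cref{def:localec} is always met); and (ii) for any fixed nonempty $S$ with $x\in S$, reversing the tree path from $x$ to $y$ never increases $\vol^{\textout}(S)$, and it decreases $|E(S,V\setminus S)|$ by exactly one if $y\notin S$ and leaves it unchanged if $y\in S$. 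I would quote (i)--(ii) and then redo only the probabilistic analysis and the complexity bookkeeping for the new stopping rule.

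For the one-sided error, fix a nonempty $S$ satisfying \Cref{eq:local cut exists}, so $\vol^{\textout}(S)\le\nu$ and $c:=|E(S,V\setminus S)|<k$ initially. Condition on the history at the start of an iteration; the DFS discovery order $x=v_1,v_2,\dots$ and the partial volumes $V_i:=\vol^{\textout}(\{v_1,\dots,v_i\})$ are then fixed, and the only new randomness is $\tau$, uniform in $[1,8\nu k]$ (or $\tau=8\nu k$ in the last iteration). The DFS is cut short with $y=v_j$ precisely when $\tau\in(V_{j-1},V_j]$, an interval of length $\textdeg^{\textout}(v_j)$; hence the event ``cut short and $y\in S$'' lies in $\{\tau\in\bigcup_{j:\,v_j\in S}(V_{j-1},V_j]\}$, a disjoint union of total length $\sum_{j:\,v_j\in S}\textdeg^{\textout}(v_j)\le\vol^{\textout}(S)\le\nu$, so its probability is at most $\nu/(8\nu k)=1/(8k)$, regardless of history. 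Call such an event a \emph{stall}. If no stall ever happens, then by (ii) every iteration that does not terminate normally strictly decreases $c$, so $c$ reaches $0$ after at most $k-1$ such iterations; once $c=0$ the set reachable from $x$ lies inside $S$ and has out-volume at most $\nu<8\nu k$, so the DFS terminates normally (forced in the last iteration, where $\tau=8\nu k$, and otherwise guaranteed under the ``no stall'' hypothesis). Hence returning $\bot$ forces at least one stall among the at most $k$ iterations, and a union bound gives $\Pr[\bot]\le k\cdot\frac1{8k}=\frac18\le\frac12$.

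For the complexity, fix an iteration and let $v_1,\dots,v_j$ be the vertices discovered before the DFS stops, so $V_{j-1}<\tau\le 8\nu k$ (the running volume is updated and tested the moment a vertex is discovered, before its incident edges are scanned). By the preprocessing to $FG_k$ we may assume every out-degree is at least $k$, whence $k(j-1)\le V_{j-1}<8\nu k$ and $|V(T)|=j\le 8\nu$; the degree-counting rule is exactly what turns the volume budget into a cap on the number of visited vertices here. Every edge the DFS inspects is an out-edge of some $v_i$ with $i<j$, so it inspects fewer than $8\nu k$ edges, and the subsequent path reversal touches only edges and vertices of $T$. Summing over the $k$ iterations yields $O(\nu k^2)$ time, $O(\nu k^2)$ distinct edges, and $O(\nu k)$ distinct vertices, i.e.\ the claimed $(\nu k^2,\nu k^2,\nu k)$-complexity.

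I expect the main obstacle to be the probabilistic step: one must check that the set of ``bad'' values of $\tau$ has length at most $\vol^{\textout}(S)$ rather than something larger, and that the argument survives a tiny $\tau$ reviving a stall even after $c$ has reached $0$, which is precisely why the bound is obtained through a single union bound over iterations and why forcing $\tau=8\nu k$ in the last iteration is necessary. The improved vertex-query bound rests entirely on the minimum-out-degree-at-least-$k$ assumption, so I would be careful to confirm that this assumption is in force throughout the loop (it is, by the reduction to $FG_k$, with the degenerate small-degree case handled separately as in \Cref{alg:abstract localec}).
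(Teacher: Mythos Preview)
Your proposal is correct and follows essentially the same route as the paper: the same path-reversal observation for Property~1, the same per-iteration bound $\Pr[y\in S]\le 1/(8k)$ summed over iterations (the paper phrases the summation via Markov on $Y=\sum \indicator[y_i\in S]$, which here is just the union bound you use), and the same complexity bookkeeping via the volume threshold and the minimum-degree assumption. One small slip: your inequality $k(j-1)\le V_{j-1}$ assumes every $v_1,\dots,v_{j-1}$ has out-degree $\ge k$, but path reversals in earlier iterations can lower $\textdeg^{\textout}(x)=\textdeg^{\textout}(v_1)$ below $k$; the paper notes that reversal reduces the out-degree of \emph{only} $x$, so one should write $k(j-2)\le V_{j-1}$ instead, which still yields $j=O(\nu)$.
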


\subsection{Local2 and Degree Counting Version}

We say that an edge is \textit{new} if it
has not been accessed in earlier iterations. Otherwise, it
is \textit{old}. It follows that reversed edges are old.

\textbf{Algorithm for Local2.}\footnote{The algorithm Local2 described in
  this paper is similar to Algorithm 1 in \cite{Forster}. Our
  description here is simpler, and achieves the same properties as
  Algorithm 1 in \cite{Forster}.} Replace line \ref{AbstractLocalEC:line1} in \Cref{alg:abstract localec}
with the following process. We grow a DFS tree $T$ starting at vertex $x$.
Let $E'(T)$ be the set of new edges visited. We stop as soon as $E'(T) \geq 8\nu$.
Let $(u,v)$ be a random edge in $E'(T)$. Finally, we set $y \gets u$.
We do not need to store $E'(T)$ to sample from if we sample $\tau$ in the
range $[1, 8\nu]$ and choose the $\tau$-th new edge.

\begin{theorem} [Equivalent to Theorem 3.1 in \cite{Forster}] \label{thm:local2}
\textsc{Local2}$(x,\nu,k)$ is \textsc{LocalEC} with $(\nu
k^2, \nu k, \nu k)$-complexity.
\end{theorem}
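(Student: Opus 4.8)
The plan is to check the three requirements of \Cref{def:localec} for \textsc{Local2} --- soundness of a returned set, the one-sided error bound, and the $(\nu k^2,\nu k,\nu k)$ complexity --- following the augmenting-path skeleton behind \Cref{thm:local1}. Throughout, let $G_i$ be the residual graph at the start of iteration $i$, so $G_1=G$ and $G_{i+1}$ is $G_i$ with the tree path $x\leadsto y_i$ reversed, and record two elementary facts. First, a path reversal only flips the orientations of the edges it contains, so for any vertex set $A$ it changes $\sum_{v\in A}\bigl(\textdeg^{\textout}_{G_i}(v)-\textdeg^{\textin}_{G_i}(v)\bigr)$ by exactly $\pm 2$ at each endpoint $x,y_j$ of a reversed path and by $0$ elsewhere; a degenerate reversal with $y_j=x$ is empty and changes nothing. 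Second, by the stopping rule on $|E'(T)|$, an iteration that stops early reads at most $8\nu$ edges that were not read in any earlier iteration.

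\emph{Soundness.} Suppose iteration $i$ terminates normally and returns $S=V(T)$. Then $x\in S$, and $S$ is exactly the set reachable from $x$ in $G_i$, so no $G_i$-edge leaves $S$; consequently every edge of $G$ originally leaving $S$ is currently reversed, and every edge originally entering $S$ is not. Substituting this into the degree-imbalance identity for $A=S$, the contributions of the edges originally entering $S$ cancel on the two sides, leaving $|E_G(S,V\setminus S)| = (i-1)-|\{\,j<i:y_j\in S\,\}|\le i-1<k$. Hence a returned set always meets the first bullet of \Cref{def:localec}.

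\emph{One-sided error.} Fix a witness $S^*$ for \Cref{eq:local cut exists}: $x\in S^*$, $\vol^{\textout}(S^*)\le\nu$, and $c^*:=|E_G(S^*,V\setminus S^*)|<k$. Track the potential $c_i:=|E_{G_i}(S^*,V\setminus S^*)|$, so $c_1=c^*$. A reversed tree path from $x\in S^*$ to $y_i$ crosses the cut one more time outward than inward when $y_i\notin S^*$ and equally often otherwise; since reversing an outward crossing removes it from $c_i$ and reversing an inward one adds it, we get $c_{i+1}=c_i-1$ when $y_i\notin S^*$ (a ``good'' iteration) and $c_{i+1}=c_i$ otherwise --- in particular $c_i$ is non-increasing, so $c_i\le c^*$ always. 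If ever $c_i=0$, then $V\setminus S^*$ is unreachable from $x$ in $G_i$, so the DFS stays inside $S^*$ and reads at most $\vol^{\textout}_{G_i}(S^*)\le\vol^{\textout}(S^*)\le\nu<8\nu$ edges, hence terminates normally --- a win. So the only way to output $\bot$ is to keep landing $y_i\in S^*$ in non-terminating iterations. Conditioned on reaching such an iteration, $y_i$ is the tail of a uniformly random edge among the $8\nu$ new edges it read, and only a new edge whose tail lies in $S^*$ is harmful; but such an edge must still be in its original orientation when first read (reversing it earlier would have required reading it earlier), and there are only $\vol^{\textout}(S^*)\le\nu$ edges pointing out of $S^*$ in $G$. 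Summing the per-iteration risks, the expected number of harmful iterations is at most $\nu/(8\nu)=1/8$, so with probability at least $7/8$ there is none; in that event every one of the $k$ iterations either wins by terminating normally or decreases $c_i$ by one, so $c_i$ hits $0$ within the first $c^*+1\le k$ iterations and the algorithm returns a set. Thus $\bot$ is returned with probability at most $1/8\le 1/2$.

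\emph{Complexity and main obstacle.} Each iteration reads at most $8\nu$ new edges, so after all $k$ iterations at most $8\nu k$ distinct edges --- and hence $O(\nu k)$ distinct vertices --- have been touched, giving the $O(\nu k)$ edge- and vertex-query bounds. For the running time, the DFS in iteration $i$ scans out-edge slots only of the vertices it visits, and each such slot is either one of the at most $8\nu$ new edges of this iteration or one of the at most $8\nu(i-1)$ edges touched earlier; so iteration $i$ costs $O(\nu k)$ time (the subsequent path reversal is cheaper still), for $O(\nu k^2)$ total. I expect the error analysis to be the crux: coupling the random choice of $y_i$ to a monotone potential across the hypothetical cut $S^*$, together with the observation that over all iterations only $\vol^{\textout}(S^*)=O(\nu)$ new edges are ever read with their tail inside $S^*$, is exactly where the constant in the $8\nu$ threshold is spent, whereas the complexity bounds then drop straight out of the new-edge budget.
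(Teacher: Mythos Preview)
Your proof is correct and follows essentially the same approach as the paper: soundness via the path-reversal observation (the paper's \Cref{obs:dfs}, which your degree-imbalance identity reproves), the one-sided error bound via the fact that at most $\vol^{\textout}(S^*)\le\nu$ original edges have their tail in $S^*$ and each has a $1/(8\nu)$ chance of being the sampled new edge, and the complexity bounds from the $8\nu$-new-edges-per-iteration budget. Your write-up is more explicit than the paper's (you spell out the potential $c_i$ and the coupling to $y_i\in S^*$, whereas the paper compresses this into one line and a reference to \cite{Forster}), but the underlying argument is the same.
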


Next, we present the degree counting version of Local2, which we call
Local2+. The algorithm is slightly more complicated. We set up
notations. For each $v \in V$, let $c(v)$ be the remaining capacity
for $v$, representing uncounted edge volume. Initially, $c(v) = \textdeg^{\textout}(v)$.

\textbf{Algorithm for Local2+.} Replace line \ref{AbstractLocalEC:line1} in \Cref{alg:abstract
  localec} with the following process. Let $\tau$ be a random integer
in the range $[1, 8\nu k]$. We grow a DFS tree starting on
vertex $x$. At any time step, let $v_1, v_2, ..., v_i$ be the sequence of vertices
visited by the DFS so far.
For the first vertex where $\sum_{j \leq i} c(v_j) \geq \tau$, we set $y \gets v_i$.
As soon as $\sum_{j \leq i} c(v_j) \geq 8\nu$,
we stop the DFS and update the remaining capacity $c(v)$ on each $v$ as follows.
We set $c(v_j) \gets 0$ for all $j < i$ and set $c(v_i) \gets \sum_{j \leq i} c(v_j) - 8\nu$.

Intuitively, we collect previously uncounted outgoing edges and choose the origin vertex for one of them at random.

\begin{theorem} \label{thm:local2+}
\textsc{Local2+}$(x,\nu,k)$ is \textsc{LocalEC} with $(\nu k^2, \nu k, \nu)$-complexity.  
\end{theorem}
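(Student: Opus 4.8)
The plan is to obtain correctness of \textsc{Local2+} directly from \Cref{thm:local2}, and to prove all three complexity bounds by accounting against the \emph{consumed capacity} $\Phi := \sum_{v\in V}\bigl(\textdeg^{\textout}(v)-c(v)\bigr)$, which starts at $0$ and, by the update rule, grows by exactly $8\nu$ in every iteration whose DFS does not terminate normally; hence $\Phi\le 8\nu k$ throughout. We may assume $k\le\nu$: otherwise, since the preprocessing guarantees minimum out-degree at least $k$, every non-empty $S\ni x$ has $\vol^{\textout}(S)\ge\textdeg^{\textout}(x)\ge k>\nu$, so no $S$ satisfies \Cref{eq:local cut exists}, and the algorithm correctly returns $\bot$ while touching only $O(\nu)$ vertices and edges.

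\textbf{Correctness.} I would argue that \textsc{Local2+} reproduces the analysis of \textsc{Local2} step for step. The residual capacities record which out-edge ``slots'' have already been charged in earlier iterations; each iteration charges exactly $8\nu$ fresh slots, spread over the visited vertices $v_1,\dots,v_i$ (zeroing $c(v_j)$ for $j<i$, carrying the surplus on $v_i$), so over the whole run each out-edge is charged at most once --- exactly the role of ``new edges'' in \textsc{Local2}. The random threshold $\tau$ makes $v_i$ the tail of a uniformly random one of the $8\nu$ slots charged in that iteration; because only the tail vertex matters, it is irrelevant that a charged slot of $v_i$ need not have been traversed by the DFS. The one inequality used in the proof of \Cref{thm:local2} --- that if $S$ satisfies \Cref{eq:local cut exists} then the number of charged slots with tail in $S$ is at most $\vol^{\textout}(S)\le\nu$, so that in a non-terminating iteration $y=v_i\notin S$ with probability at least $1-\nu/(8\nu)$ and a crossing edge of $S$ gets reversed --- holds verbatim here, since $\sum_{v_j\in S}c(v_j)\le\sum_{v\in S}\textdeg^{\textout}(v)=\vol^{\textout}(S)$. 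And if the DFS of some iteration terminates normally, $V(T)$ is exactly the set reachable from $x$ in the current graph, so $E(V(T),V\setminus V(T))=\emptyset$ and $V(T)$ is a legal output. Hence the failure probability is at most $1/2$, as for \textsc{Local2}.

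\textbf{Complexity.} The crucial observation is: if the DFS of some iteration examines at least one out-edge of a vertex $v$, then $v$ was a visited vertex other than the stopping vertex $v_i$ of that iteration (the DFS halts the instant $v_i$ is pushed, before scanning it), so $c(v)$ was set to $0$ in that iteration; hence every \emph{scanned} vertex $v$ has $c(v)=0$ at the end, i.e.\ its full out-degree was consumed. Since $\textdeg^{\textout}(v)\ge k$ and $\sum_{v\text{ scanned}}\textdeg^{\textout}(v)$ lower-bounds the total consumed capacity $\le 8\nu k$, there are at most $8\nu$ scanned vertices, of total out-degree at most $8\nu k$. Now: (i) every edge the algorithm accesses is an out-edge of a scanned vertex, so the edge-query complexity is $\le\sum_{v\text{ scanned}}\textdeg^{\textout}(v)=O(\nu k)$; (ii) every vertex the algorithm accesses is either scanned ($\le 8\nu$ of them) or a stopping vertex of one of the $\le k$ non-terminating iterations, so the vertex-query complexity is $O(\nu+k)=O(\nu)$; (iii) in each of the $\le k$ iterations the DFS examines at most $\sum_{v\text{ scanned this iteration}}\textdeg^{\textout}(v)\le O(\nu k)$ edges and the capacity update touches $O(|V(T)|)=O(\nu k)$ vertices, so the running time is $O(\nu k^2)$.

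\textbf{Expected main difficulty.} The subtle point --- and the reason this needs an argument beyond invoking \Cref{thm:local2} --- is that \textsc{Local2+} counts by whole-vertex volume rather than edge by edge, so (a) a vertex's capacity may be zeroed although the DFS scanned only a prefix of its adjacency list (this happens for the ancestors of the stopping vertex), and (b) the stopping vertex ``overshoots'' $8\nu$ by up to $\textdeg^{\textout}(v_i)$. I expect the real work to lie in making the observation ``scanned $\Rightarrow$ eventually fully consumed'' airtight under (a), including robustness to re-visits across iterations, and in confirming that the $\le k$ stopping vertices and the overshoot do not spoil the $O(\nu)$ vertex bound --- which is exactly where the assumption $k\le\nu$ enters.
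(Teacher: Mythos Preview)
Your proposal is correct and follows essentially the same approach as the paper: correctness is inherited from the \textsc{Local2} analysis by treating residual-capacity units as the ``new edges'' of \textsc{Local2}, and all three complexity bounds are obtained from the fact that the total consumed capacity never exceeds $8\nu k$ together with the minimum-out-degree $\ge k$ assumption. The only cosmetic difference is bookkeeping---the paper bounds \emph{new} vertices per iteration by $O(\nu/k)$ via $k|S'|\le\sum_{v\in S'}c(v)<8\nu$ and then sums over $k$ iterations, whereas you introduce the global potential $\Phi$ and split visited vertices into ``scanned'' versus the at most $k$ ``stopping'' vertices; both routes arrive at the same $O(\nu)$ vertex bound.
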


\subsection{Proof of \Cref{thm:local1,thm:local1+,thm:local2,thm:local2+}} \label{sec:proofsinmainpart}
In this section, we address proofs for \Cref{thm:local1,thm:local1+,thm:local2,thm:local2+}.

\textbf{Correctness.} It can be shown that all four algorithms (Local1, Local1+, Local2, Local2+)
are LocalEC through a similar argument as used in \cite{Forster}. For completeness, we
provide the proofs in \Cref{appendix:omitted proofs}.

\textbf{Complexity.} Let $\mathcal{A}$ be an LocalEC algorithm (\Cref{def:localec}), and
let $\nu$, and $k$ be the parameters of the algorithm. We define three measure of
complexity $T(\cA,G),U_E(\cA,G),$ and $U_V(\cA,G)$ on input graph $G$ and LocalEC algorithm $\cA$ as follows. Let
$T(\cA,G)$ be the number of times that the algorithm accesses edges on the input graph
$G$. $T(\cA,G)$ measures time complexity of the algorithm. Let $U_E(\cA,G)$ be the
number of unique edges accessed by the algorithm on graph $G$. This measures how
much information (in terms of number of edges) that  the algorithm needs to run. Let
$U_V(\cA,G)$ be the number of unique vertices accessed by the algorithm on graph $G$. 

\begin{observation}
For any graph $G$ and LocalEC algorithm $\cA$, $T(\cA,G) \geq U_E(\cA,G) \geq U_V(\cA,G)$.
\end{observation}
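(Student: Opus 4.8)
Both inequalities are elementary counting statements, so the plan is simply to make precise what each of $T(\cA,G)$, $U_E(\cA,G)$, $U_V(\cA,G)$ counts and then to exhibit the two relevant comparisons, each via a trivial multiplicity bound or an explicit injection. No probabilistic features of LocalEC are needed; only the DFS structure of the common template \textsc{AbstractLocalEC} (\Cref{alg:abstract localec}) and the sparsification preprocessing are used.

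For $T(\cA,G)\ge U_E(\cA,G)$ I would write $T(\cA,G)=\sum_{e} m(e)$, where the sum ranges over the edges $e$ that $\cA$ ever accesses and $m(e)\ge 1$ is the number of times $\cA$ accesses $e$. There are exactly $U_E(\cA,G)$ summands, each at least $1$, so $T(\cA,G)\ge U_E(\cA,G)$. This step is immediate once ``$T$ counts accesses with multiplicity, $U_E$ without'' is stated.

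For $U_E(\cA,G)\ge U_V(\cA,G)$ I would use that every vertex $\cA$ touches is touched while growing some DFS tree rooted at $x$, and that a vertex $v\ne x$ is added to the current tree exactly when the search traverses an edge whose head is $v$. Fix, for each accessed vertex $v\ne x$, the first such ``entering'' edge $e_v$. Since distinct vertices have distinct heads, $v\mapsto e_v$ is an injection from the accessed vertices other than $x$ into the accessed edges, which already gives $U_V-1\le U_E$. To close the remaining gap I would account for $x$ as well: after the Nagamochi--Ibaraki preprocessing every vertex has out-degree at least $k\ge 1$, so the search rooted at $x$ always inspects at least one edge incident to $x$ (equivalently: the accessed edges and vertices form a connected subgraph containing $x$, and whenever $|V(T)|\ge 2$ the search scans an edge out of the last-visited vertex beyond the tree edges already counted; when only $x$ is touched, $U_E\ge 1 = U_V$ directly). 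Charging one such edge to $x$ extends the injection, yielding $U_V\le U_E$.

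The only place the argument needs genuine care is this last bit of bookkeeping at the root $x$ in the second inequality: one must verify, for each of Local1, Local1+, Local2, Local2+, that the edge charged to $x$ is really accessed and is not already charged to some $v\ne x$. I expect to handle this by a short case distinction on $|V(T)|$ within a single iteration of \textsc{AbstractLocalEC}, which is routine; everything else is a one-line count.
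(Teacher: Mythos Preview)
The paper does not prove this observation; it is stated without justification as a one-line remark immediately after the definitions of $T$, $U_E$, $U_V$. Your plan therefore supplies strictly more than the paper does, and on the substantive points it is correct.

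Your argument for $T\ge U_E$ is exactly the obvious one (accesses counted with multiplicity dominate accesses counted without). For $U_E\ge U_V$, the DFS-tree injection $v\mapsto e_v$ is the natural argument and cleanly gives $U_V-1\le U_E$. The residual $+1$ at the root is indeed the only place needing thought, and you are right to flag it. One caveat on your proposed fix (``min-degree $\ge k\ge 1$, so the search inspects an edge out of $x$''): this is fine for \textsc{Local1} and \textsc{Local2}, but in the degree-counting variants \textsc{Local1+} and \textsc{Local2+} the early-termination test is on $\vol^{\textout}(V(T))$ (respectively $\sum c(v)$) rather than on traversed edges, so an iteration can in principle halt at $V(T)=\{x\}$ before any edge is scanned. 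In such a degenerate call the literal pointwise inequality $U_E\ge U_V$ can fail by one. Since every use the paper makes of the observation is inside an $O(\cdot)$ bound, this off-by-one is immaterial; if you want the exact inequality you can simply adopt the convention that a vertex is ``accessed'' only once some incident edge has been accessed.
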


\textbf{Local1.} To see that Local1 has $(O(\nu k^2), O(\nu k^2),
O(\nu k^2))$-complexity, it is enough to prove that $T(\text{Local1},G) = O(\nu k^2)$.
This follows easily because each iteration we stop the DFS after visiting
exactly $8\nu k$ edges, and there are at most $k$ iterations. 

\textbf{Local1+.} We first prove that $T(\text{Local1+},G) = O(\nu k^2)$. Since there
are $k$ iterations, it is enough to bound one iteration. Let $S$ be the set of vertices visited
by the DFS before the step at which it stops early. Clearly, $\vol^{\textout}(S) < 8\nu k$,
or we would have stopped earlier. By design, new edges can be only visited within the set
$E(S,S)$ or at the last step. Therefore, the number of edges visited is at most $|E(S,S)|+1
\leq \vol^{\textout}(S)+1 = O(\nu k)$ per iteration and $O(\nu k^2)$ in total. We have
$T(\text{Local1+},G) = O(\nu k^2)$.

Remember that if the minimum degree is initially at least $k$ to avoid trivial cuts. When
paths are reversed, no vertex other than $x$ will have reduced degree. Therefore we have
$k(|S|-1) \leq \vol^{\textout}(S) < 8\nu k$. It follows that we visit at most $O(\nu)$ vertices
in each iteration and $O(\nu k)$ in total.

\textbf{Local2.} We first prove that $U_E(\text{Local2},G) = O(\nu k)$. By design, for each
iteration, we collect at most $8\nu$ new edges. Since we repeat for $k$ iterations, we collect at
most $8\nu k$ total new edges. Next, we prove $T(\text{Local2},G) = O(\nu k^2)$. Since each edge
can be revisited at most $k$ times, we have $T(\text{Local2},G) \leq k U_E(\text{Local2},G) =
O(\nu k^2)$.

\textbf{Local2+.} We first prove that $U_E(\text{Local2+},G) = O(\nu k)$. If true, then we also
have $T_E(\text{Local2+},G) \leq k U_E(\text{Local2+},G) = O(\nu k^2)$. We will never visit an
outgoing edge of vertex $v$ unless all its capacity has been exhausted. Therefore the total used
capacity (at most $k$ times $8\nu$) is an upper bound for the number of distinct edges visited.
For $U_V(\text{Local2+},G)$, fix any iteration.
Let $S$ be the set of vertices visited by the DFS one step before
terminating and $S' \subseteq S$ the subset
of $S$ that have not been visited before. Clearly, we have
$k|S'| \leq \vol^{\textout}(S') = \sum_{v \in S'} c(v) \leq \sum_{v
  \in S} c(v) < 8\nu$. The first inequality follows since the minimum
degree is at least $k$.
We visit at most $|S'|+1 = O(\nu / k)$ distinct vertices per iteration for a total of $O(\nu)$ distinct vertices.

\section{Experimental Results} \label{sec:experiment}

\subsection{Experimental Setup}

The algorithms were implemented and compiled using C++17 with Microsoft Visual Studio 2019. All
experiments were run on a Windows 10 computer with Intel i7-9750H CPU (2.60GHz) and 16 GB DDR4-2667
RAM.

Four algorithms are compared. \textbf{LOCAL1}, \textbf{LOCAL1+} and \textbf{LOCAL2+} are
implementations based on the algorithm by Forster et al \cite{Forster}. The full algorithm to
compute vertex connectivity using LocalEC is described in \Cref{appendix:fullocalvc} and
originally by \cite{Nanongkai}. LOCAL1 and LOCAL1+ use Local1 and Local1+ as their LocalEC algorithm
with $2 \nu k$ substituted for $8 \nu k$. LOCAL2+ uses the LocalEC algorithm Local2+ with $3 \nu$
substituted for $8 \nu$. \textbf{HRG} is an implementation of the randomised version of the
algorithm by Henzinger, Rao and Gabow \cite{HRG}. The implementation
details are described in \Cref{appendix:hrg}. All algorithms were implemented using parameters
that bound theoretical success probability from below by a roughly
equal constant.  Since the data consists
of undirected graphs only, the sparsification algorithm by Nagamochi and Ibaraki
\cite{sparsification} is used together with each algorithm. The $O(m)$ partitioning of the edges
into disjoint forests is not included in the measured time. Construction of the sparse graphs in
$O(nk)$ time is included. As a result, none of the algorithms have time complexity dependent on m.
Graph size is reported only in terms of vertices.

\subsubsection{Data}

The data consists of random graphs with planted vertex cuts, random hyperbolic graphs and real world
data.

The first artificial dataset consists of graphs with a planted unique minimum vertex cut, which can
be generated with full control over vertex connectivity and balancedness. We
partition a complete graph into three sets $L$, $S$ and $R$ and use a subset of the edges in $E
\setminus E(L, R)$, chosen using a modified version of the sparsification algorithm by Nagamochi and
Ibaraki \cite{sparsification}. Like Nagamochi and Ibaraki, we label the edges to partition them into
disjoint forests $\{E_1, E_2, ... \}$ such that $(x, y) \in E_i$ implies that there is a path
between $x$ and $y$ in $E_1, E_2, ..., E_{i-1}$. Nagamochi and Ibaraki show that if this property
holds for all edges, then the union of the $k$ first forests is $k$-connected if the original graph
is $k$-connected. Unlike Nagamochi and Ibaraki, we randomly partition the edges by placing them in the
applicable forest with the lowest index in a random order.
We choose $k = 60 > |S|$ to guarantee that $S$ is a unique vertex cut that separates $L$ from $R$.
For each set of parameters we generate five graphs and run the algorithm five times each and report
the average.

The second artificial dataset consists of random hyperbolic graphs, generated using NetworKIT
\cite{NetworKIT}, which provides an implementation of the generator by von Looz et al.
\cite{vonLooz}. The properties of random hyperbolic graphs include a degree distribution that
follows a power law and small diameter, which are common in real world graphs \cite{AboutRHG}. The
graphs are generated with average degree 32 and a power law exponent of 10. We generate 20 graphs
each for sizes $2^{10}, 2^{11}, ..., 2^{18}$ vertices and group them according to vertex
connectivity. We run the algorithm five times per graph and report the average for each group with
the same size and vertex connectivity.

The real world data is based on three graphs from the SNAP dataset \cite{SNAP}, soc-Epinions1,
com-LiveJournal and web-BerkStan. The LiveJournal dataset is originally undirected. The other two
are directed graphs read as undirected, which means that we compute weak vertex connecitivity for
these graphs. We preprocess these graphs by taking the largest connected component for a $k$-core. A
$k$-core is defined as the edge-maximal subgraph with minimum degree at least $k$. Only $k$-cores whose
vertex connectivity is over 1 but less than the minimum degree are used. For each $k$-core we run the
algorithms 25 times and report the average.

\subsection{Planted Cuts} \label{sec:planted cuts}

In theory the running time for HRG is linear in $\kappa$ and the algorithms based on Forster et al.
\cite{Forster} are cubic in $\kappa$. Figure \ref{PC_VarLS:a} shows that the running time for HRG indeed
grows much slower with $\kappa$. The running time for LOCAL1 exceeds that of HRG much earlier, at $\kappa \geq
17$, than LOCAL1+ ($\kappa \geq 40$) and LOCAL2+ ($\kappa \geq 48$).

Figure \ref{PC_VarLS:b} shows that all four algorithms perform reasonably well both for graphs with
unbalanced cuts and balanced cuts, although HRG is faster for unbalanced graphs by a factor of 2.
Internal testing suggests that the running time of HRG is roughly proportional to $|L|^2 + |R|^2$.
The difference between the highest and lowest running time is
a factor of 1.99 for HRG, 1.19 for LOCAL1, 1.27 for LOCAL1+ and 1.16 for LOCAL2+.

\begin{figure}[ht]
\centering
	\captionsetup{position=top,justification=centering}
	\subcaptionbox{$|L| = 5$, $n = 1000$\label{PC_VarLS:a}}
		{\includegraphics[width=0.3\linewidth]{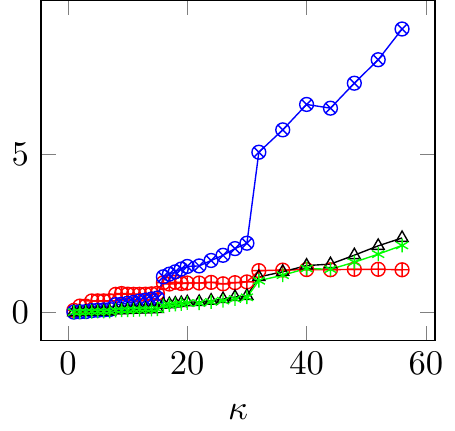}}
	\subcaptionbox{$\kappa = 5$, $n = 1000$\label{PC_VarLS:b}}
		{\includegraphics[width=0.3\linewidth]{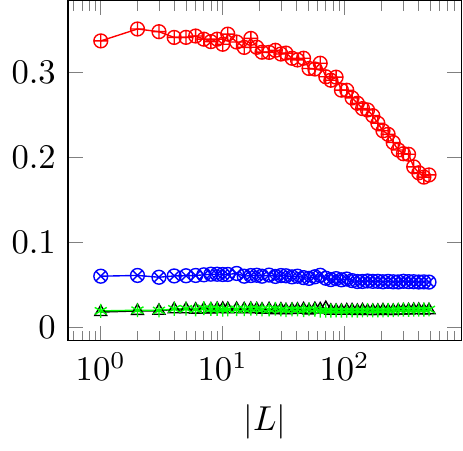}}
	\subcaptionbox*{}{\includegraphics[width=0.2\linewidth]{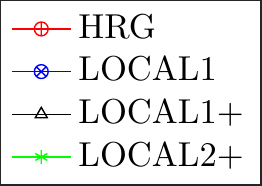}}
\caption{Running time (seconds) for Planted Cuts with variable $|L|$ or $\kappa$}
\label{PC_VarLS}
\end{figure}

When $\kappa < 16$, LOCAL1, LOCAL1+ and LOCAL2+ outperform the quadratic-time HRG on very small
graphs with planted cuts. At $\kappa = 4$ in figure \ref{PC_varNperN:a}, HRG takes 23 ms for 100
vertices, which is already slower than both LOCAL1+ and LOCAL2+. LOCAL1 is faster than HRG at $n
\geq 200$. When $\kappa = 15$ (figure \ref{PC_varNperN:d}), HRG is slower than LOCAL1+ and LOCAL2+
at $n \geq 250$ and LOCAL1 at $n \geq 550$.

LOCAL1+ and LOCAL2+ perform very similarly for small graphs but on larger graphs, LOCAL2+ is faster,
as shown by figure \ref{PC_varNperN:e}.

\begin{figure}[ht]
\centering
	\captionsetup{position=top,justification=centering}
	\subcaptionbox{$\kappa = 4$, $|L| = 5$\label{PC_varNperN:a}}
		{\includegraphics[width=0.3\linewidth]{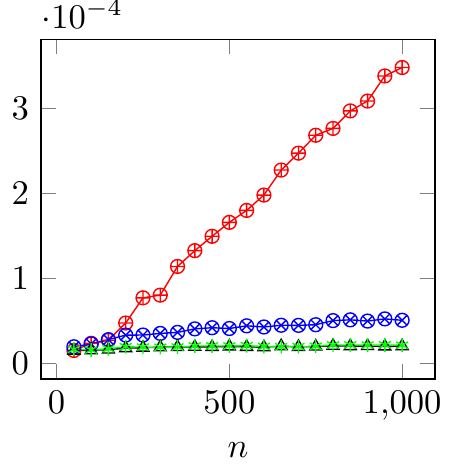}}
	\subcaptionbox{$\kappa = 7$, $|L| = 5$}
		{\includegraphics[width=0.3\linewidth]{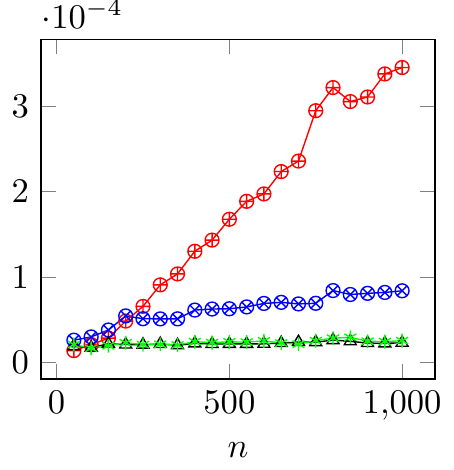}}
	\subcaptionbox{$\kappa = 8$, $|L| = 5$\label{PC_varNperN:c}}
		{\includegraphics[width=0.3\linewidth]{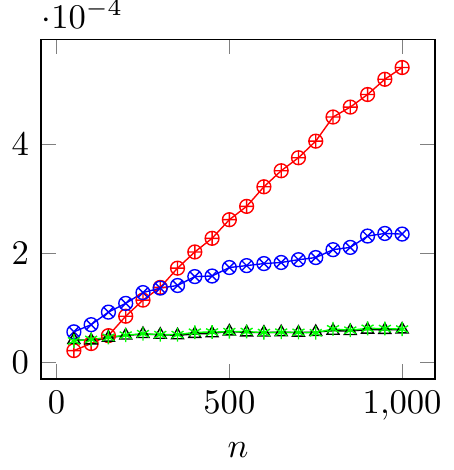}}
	\subcaptionbox{$\kappa = 15$, $|L| = 5$\label{PC_varNperN:d}}
		{\includegraphics[width=0.3\linewidth]{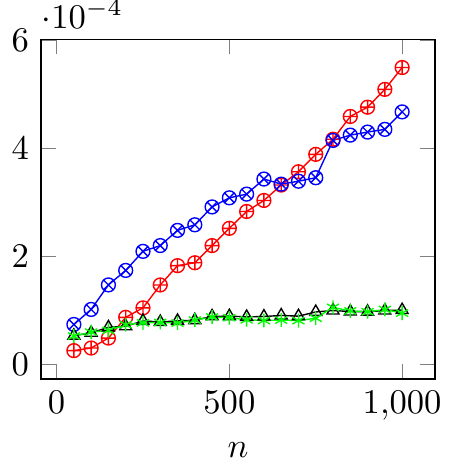}}
	\subcaptionbox{$\kappa = 31$, $|L| = 5$\label{PC_varNperN:e}}
		{\includegraphics[width=0.3\linewidth]{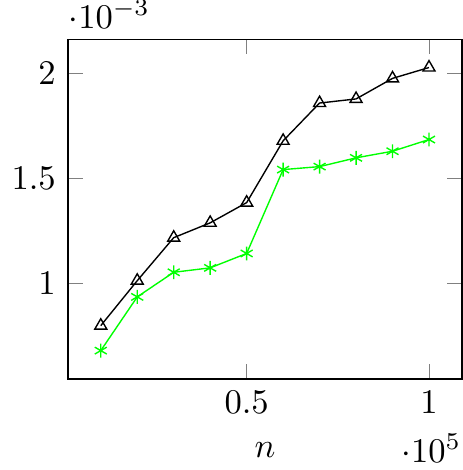}}
	\subcaptionbox*{}{\includegraphics[width=0.2\linewidth]{legend.pdf}}
\caption{Running time (seconds) per vertex for Planted cuts with fixed $|L|$ and $|S|$}
\label{PC_varNperN}
\end{figure}

\subsection{Random Hyperbolic Graphs} \label{sec:random hyperbolic graphs}

HRG is much faster on random hyperbolic graphs than on the planted cut dataset. Comparing figures
\ref{PC_varNperN:c} and \ref{RHG_varNperN:c}, the performance of HRG on 1000 vertex graphs with
planted cuts of size 8 is similar to that on random hyperbolic graphs with the same vertex
connectivity and over 30000 vertices. The performance differences are smaller for LOCAL1, LOCAL1+
and LOCAL2+, which means that the point at which these algorithms outperform HRG occurs at somewhat
higher $n$.

For random hyperbolic graphs with $\kappa = 7$ (figure \ref{RHG_varNperN:b}), HRG and LOCAL1 are
equally fast at 4096 vertices (0.6 seconds). HRG is faster than LOCAL1 for all included random
hyperbolic graphs where $\kappa > 7$, including graphs up to 32768 vertices. The running time for
LOCAL1+ and LOCAL2+ is close to that of HRG for random hyperbolic graphs where $\kappa = 12$ and $n
\in [1024, 4196]$ (figure \ref{RHG_varNperN:e}).

\begin{figure}[ht]
\centering
	\captionsetup{position=top,justification=centering}
	\subcaptionbox{$\kappa = 4$}
		{\includegraphics[width=0.3\linewidth]{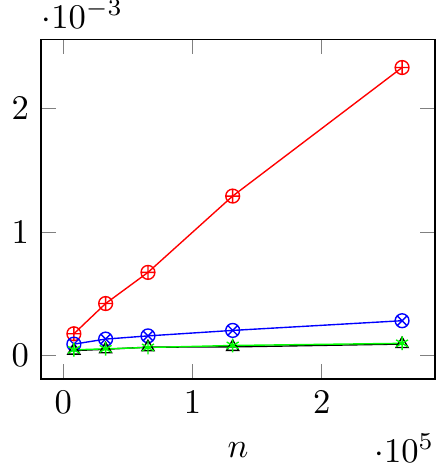}}
	\subcaptionbox{$\kappa = 7$\label{RHG_varNperN:b}}
		{\includegraphics[width=0.3\linewidth]{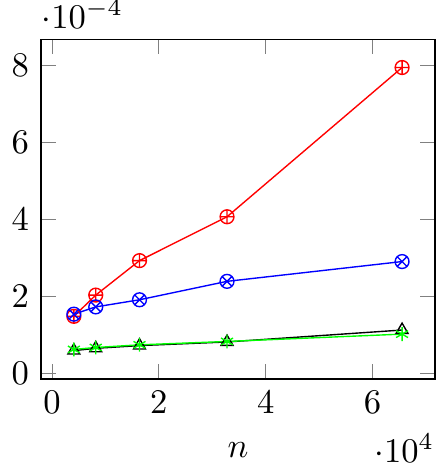}}
	\subcaptionbox{$\kappa = 8$\label{RHG_varNperN:c}}
		{\includegraphics[width=0.3\linewidth]{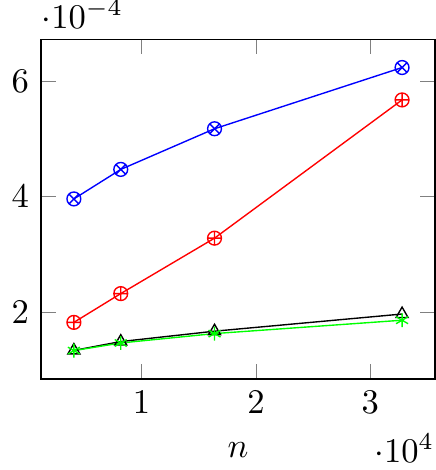}}
	\subcaptionbox{$\kappa = 10$}
		{\includegraphics[width=0.3\linewidth]{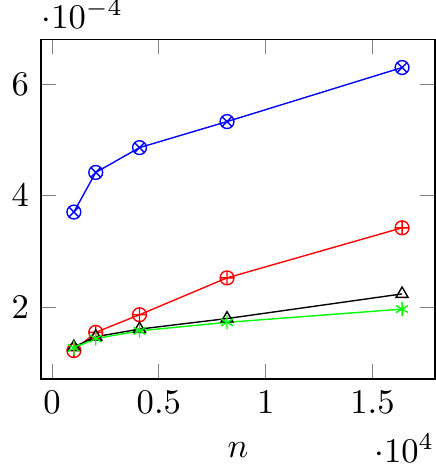}}
	\subcaptionbox{$\kappa = 12$\label{RHG_varNperN:e}}
		{\includegraphics[width=0.3\linewidth]{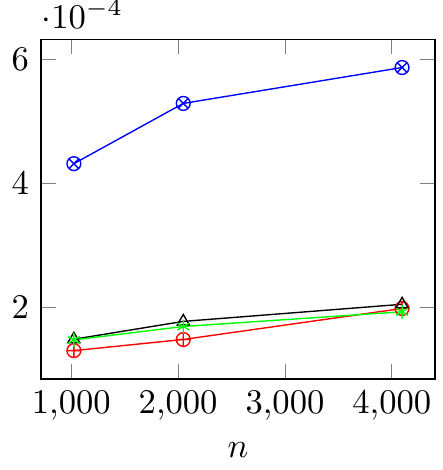}}
	\subcaptionbox*{}{\includegraphics[width=0.2\linewidth]{legend.pdf}}
\caption{Running time (seconds) per vertex for Random Hyperbolic Graphs}
\label{RHG_varNperN}
\end{figure}

\subsection{Real-World Networks} \label{sec:real world}

Figure \ref{table-RWG} presents real world network data. Each row represents a $\delta$-core, where
$\delta$ is the minimum degree of the resulting graph. Note that in general, minimum degree for a
$k$-core can exceed $k$. Figure \ref{table-PC} shows data for graphs with planted cuts with similar
parameters to the real world graphs, for comparison.

LOCAL1+ and LOCAL2+ clearly outperform LOCAL1 on real-world networks, as on artificial data.
The $k$-cores of soc-Epinions1 have very similar performance in
real world networks and graphs with planted cuts in figure \ref{table-PC}.
Performance for other real network data is generally faster for all four algorithms than for planted
cuts, especially for HRG, which is 5-8 times faster on real world data. Similarly, running times for
LOCAL1, LOCAL1+ and LOCAL2+ are also higher on random hyperbolic graphs than on $k$-cores
of com-lj.ungraph and web-BerkStan.

\begin{figure}[ht] 
	\includegraphics{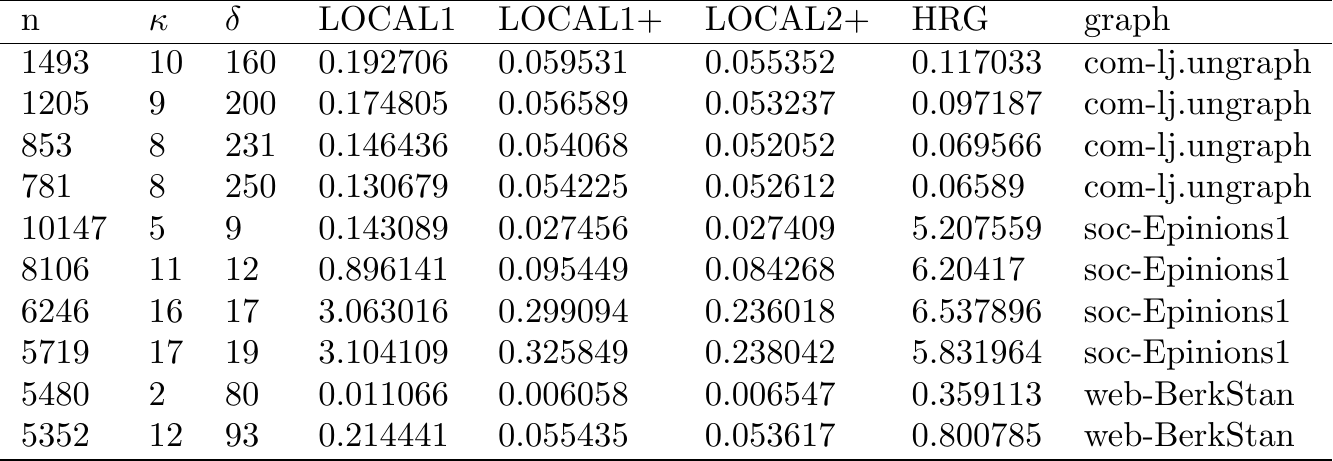}
\caption{Running times (milliseconds) per vertex on $k$-cores for real world networks}
\label{table-RWG}
\end{figure}

\begin{figure}[ht] 
	\includegraphics{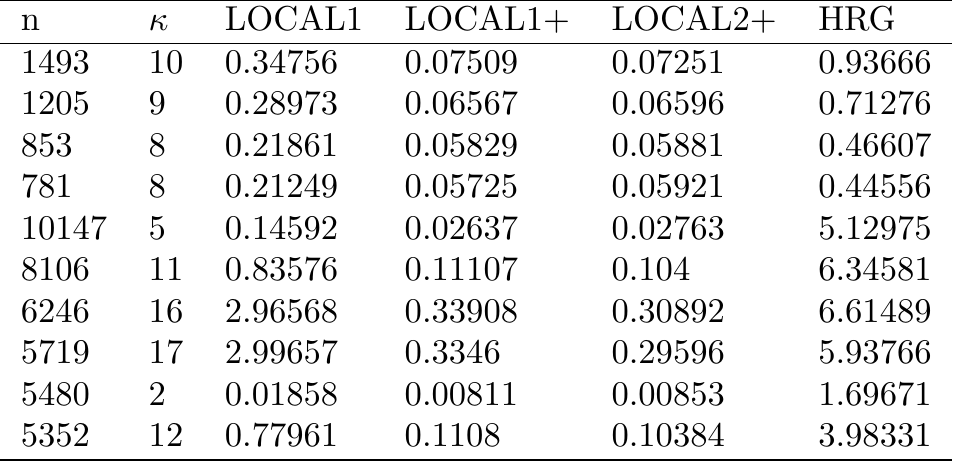}
\caption{Running times (milliseconds) per vertex on Planted Cuts ($|L| = 5$)}
\label{table-PC}
\end{figure}

\begin{figure}[h!]
\raggedright
	\captionsetup{position=top,justification=centering}
\subcaptionbox{k=8, $\frac{E_{LocalEC}}{\nu k}$}
{\includegraphics[width=0.3\linewidth]{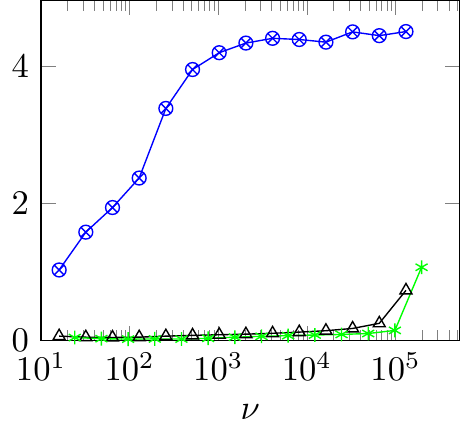}}
\subcaptionbox{k=16, $\frac{E_{LocalEC}}{\nu k}$}
{\includegraphics[width=0.3\linewidth]{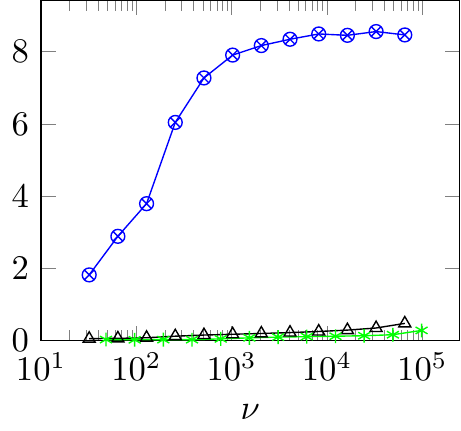}}
\subcaptionbox{k=32, $\frac{E_{LocalEC}}{\nu k}$}
{\includegraphics[width=0.3\linewidth]{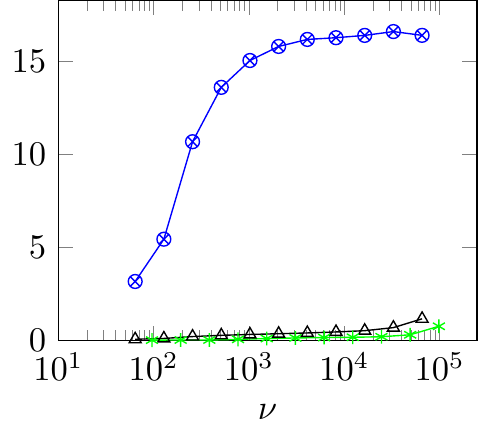}}
\subcaptionbox{k=8, $\frac{E_{LocalEC}}{\nu k}$}
{\includegraphics[width=0.3\linewidth]{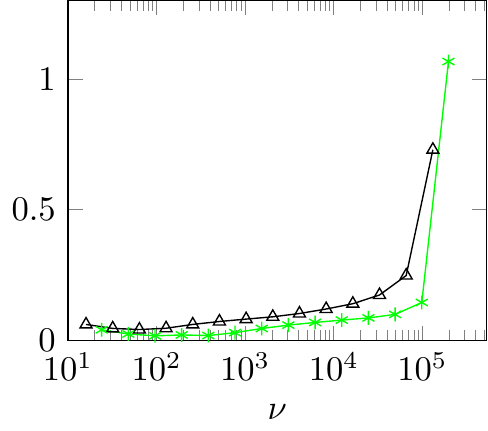}}
\subcaptionbox{k=16, $\frac{E_{LocalEC}}{\nu k}$}
{\includegraphics[width=0.3\linewidth]{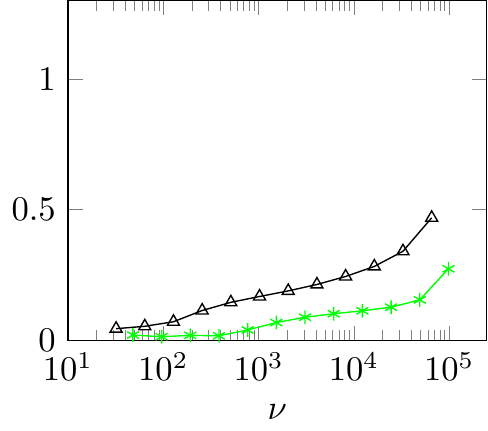}}
\subcaptionbox{k=32, $\frac{E_{LocalEC}}{\nu k}$}
{\includegraphics[width=0.3\linewidth]{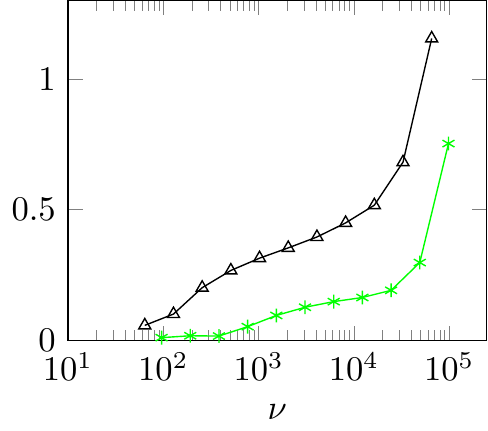}}
\subcaptionbox*{}{\includegraphics[width=0.2\linewidth]{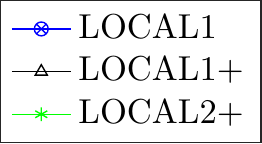}}
\caption{Planted cuts with $n = 100000, |L|=5, k = \kappa$\\
(Non-unique) average edges per LocalEC call, normalised by $\nu k$}
\label{edgesovervk}
\end{figure}

\subsection{Effectiveness of Degree Counting} \label{sec:effectiveness}

In figure \ref{edgesovervk} we study internal measurements from LocalEC in the different algorithms.
Note that Local1 and Local1+ apply a multiplicative factor of 2 to $\nu$ and Local2+ a factor of 3.
The values used here include
this increase. The number of edges visited by the average call to LocalEC at each value for $\nu$ is
normalised by $\nu k$. This metric approximately doubles for Local1 and Local1+ when $k$ is doubled, as expected for
algorithms quadratic in $k$. The metric grows for Local2+ too, but by a smaller factor around 1.5
for most values. The growth is faster for higher values for $\nu$ and for the highest values it is
approximately by a factor 2, like the other two algorithms.

The number of edges explored relative to $\nu$ is higher for high $\nu$ for all algorithms and
parameters in figure \ref{edgesovervk}. For LOCAL1, it converges towards $\frac{\nu k}{2}$, which is
the average of $[1, \nu k]$, the range of possible early stopping points $\tau$.

 Local1 clearly visits more edges than in Local1+ and Local2+ by a large factor, according to figure
\ref{edgesovervk}. Figure \ref{table-CPU} shows that most of the running time of LOCAL1 is used
searching for unbalanced cuts with LocalEC. However, LOCAL1+ and LOCAL2+ spend a similar amount of
time on balanced and unbalanced cuts. The only difference between the versions is the choice of
LocalEC. These results suggest that degree counting improves the practical performance of LocalEC
significantly but there is not much more room for improvement through LocalEC without also further
optimising x-y max flow to search for balanced cuts. When the number of vertices is increased by a
factor of 10, the time spent searching for unbalanced cuts does not seem to grow faster than the
time spent searching for balanced cuts. The category ``other'' is dominated by initial setup for the
data structures.

\begin{figure}[h!]
	\captionsetup{position=top,justification=centering}
\subcaptionbox{$n=10000,
\kappa=8$\label{table-CPU:subA}}{\includegraphics[width=0.49\textwidth]{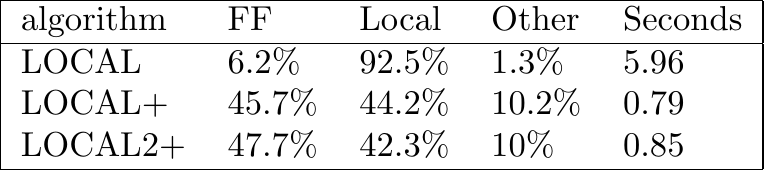}}
\subcaptionbox{$n=10000,
\kappa=16$\label{table-CPU:subB}}{\includegraphics[width=0.49\textwidth]{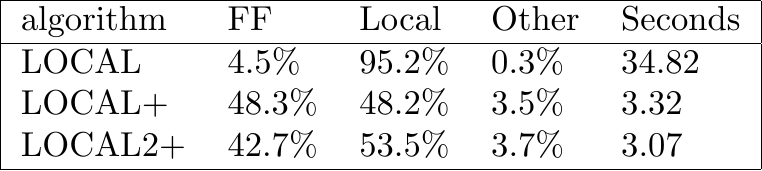}}\\
\subcaptionbox{$n=100000,
\kappa=8$\label{table-CPU:subC}}{\includegraphics[width=0.49\textwidth]{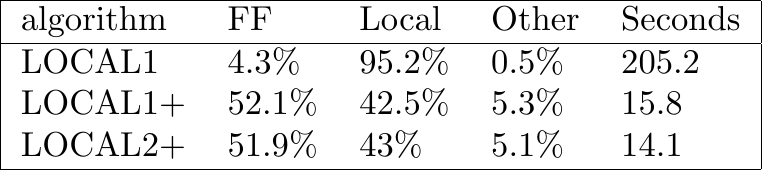}}
\subcaptionbox{$n=100000,
\kappa=16$\label{table-CPU:subD}}{\includegraphics[width=0.49\textwidth]{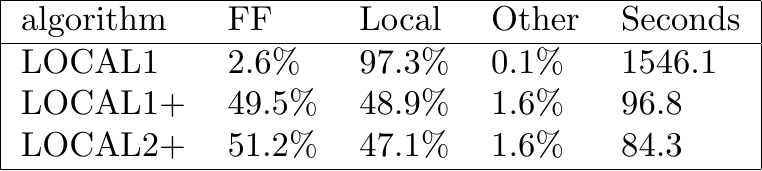}}
\caption{CPU use: balanced cuts/Ford-Fulkerson(FF) vs unbalanced cuts/LocalEC(Local).
\\Running time was measured separately.}
\label{table-CPU}
\end{figure}

\subsection{Success rate}

We define the success rate of a vertex connectivity algorithm as the percentage of attempts that
yields an optimal cut. The observed success rate for HRG is at or near 100\% on all featured
datasets. For random hyperbolic graphs, none of the algorithms returned nonoptimal cuts. For graphs
with planted cuts and $k$-cores of real world networks, the success rates are 97\%+ for LOCAL1, 96\%+
for LOCAL1+ and 95\%+ for LOCAL2+.

\section{Conclusion and Future Work}  \label{sec:conclusion}

We study the experimental performance of the near-linear time algorithm by
\cite{Forster} when the input graph connectivity is small. The algorithm is
based on local search. We also introduce a new heuristic for the local search
algorithm, which we call degree counting. Based on experimental results, the
degree counting heuristic significantly improves the empirical running time of
the algorithm over its non-degree counting counterpart. For future work, we
plan to extend the experiments to directed graphs, and on larger instances
of datasets (order of millions edges). 

\bibliography{sources}


\appendix

\section{Omitted Proofs}  \label{appendix:omitted proofs}
\subsection{Correctness}

To show that any algorithm among \textsc{Local1}, \textsc{Local1+}, \textsc{Local2}, and
\textsc{Local2+} is LocalEC, it is enough to prove that it satisfies two properties:

\begin{property} \label{prop:1}
If $V(T)$ is returned, then  $|E(V(T), V - V(T))| < k$ and $\emptyset \neq V(T) \subsetneq V$.
\end{property}

\begin{property} \label{prop:2}
If there is a vertex-set $S$ satisfying \Cref{eq:local cut exists}, then $\bot$ is returned with
probability at most $1/2$.
\end{property} 

The following simple observation is due to \cite{Chechik}. 
\begin{observation} \label{obs:dfs}
Let $S$ be a vertex-set in graph $G$ and $x \in S$. Let $P$ be a path from $x$ to $y$. Let $G'$ be
$G$ after reversing all edges along $P$. If $y \in S$, then $|E_{G'}(S,V-S)| = |E_{G}(S,V-S)|$.
Otherwise, $|E_{G'}(S,V-S)| = |E_{G}(S,V-S)| - 1$.
\end{observation}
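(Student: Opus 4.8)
The plan is to track, edge by edge, how reversing $P$ changes the contribution to the cut $(S,V\setminus S)$, and then telescope. Write the path as $P=(x=v_0,v_1,\dots,v_\ell=y)$; since in our application $P$ is the unique path in a DFS tree it is simple, so the directed edges $(v_0,v_1),\dots,(v_{\ell-1},v_\ell)$ are pairwise distinct, and $G'$ is obtained from $G$ by replacing each $(v_i,v_{i+1})$ with $(v_{i+1},v_i)$ while leaving every other edge of $G$ untouched. In particular only these $\ell$ edges can change their status with respect to the cut.

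Next I would set $f(v):=\indicator[v\in S]$ and use the fact that a directed edge $(a,b)$ belongs to $E(S,V\setminus S)$ exactly when $f(a)=1$ and $f(b)=0$, i.e.\ its contribution to $|E(S,V\setminus S)|$ equals $f(a)\,(1-f(b))$. Summing the change over the edges of $P$ (all other edges contributing $0$) gives
\[
|E_{G'}(S,V\setminus S)|-|E_{G}(S,V\setminus S)|=\sum_{i=0}^{\ell-1}\Big(f(v_{i+1})(1-f(v_i))-f(v_i)(1-f(v_{i+1}))\Big)=\sum_{i=0}^{\ell-1}\big(f(v_{i+1})-f(v_i)\big).
\]
This telescopes to $f(v_\ell)-f(v_0)=f(y)-f(x)$, and since $x\in S$ we have $f(x)=1$, so the change equals $f(y)-1$: it is $0$ when $y\in S$ and $-1$ when $y\notin S$, which is exactly the dichotomy in the statement.

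I do not expect a genuine obstacle here; the only point needing a sentence of care is that the edges of $P$ are distinct and that counting edges with multiplicity makes reversing well behaved (simplicity of the DFS-tree path also rules out the degenerate case of $(v_i,v_{i+1})$ and $(v_{i+1},v_i)$ both lying on $P$). Should one prefer to avoid the indicator algebra, an equivalent finish is to note that along $P$ the ``$S\to V\setminus S$'' and ``$V\setminus S\to S$'' transitions must alternate, and the first one (if any) is of the former type because $v_0=x\in S$; hence their counts are equal when $y\in S$ and differ by exactly one when $y\notin S$, giving the same conclusion.
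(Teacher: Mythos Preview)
Your argument is correct. The paper does not actually prove this observation; it merely states it and attributes it to \cite{Chechik}, so there is no ``paper's own proof'' to compare against. Your telescoping computation with $f(v)=\indicator[v\in S]$ is a clean and standard way to see why the change in $|E(S,V\setminus S)|$ equals $f(y)-f(x)$, and your remarks about simplicity of the DFS-tree path and counting edges with multiplicity address exactly the edge cases one would worry about. Nothing is missing.
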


For the first property, the following argument works for all four algorithms. 
\begin{lemma}
 \textsc{Local1}, \textsc{Local1+}, \textsc{Local2} and \textsc{Local2+} satisfy \Cref{prop:1}.
\end{lemma}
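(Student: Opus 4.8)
The plan is to exploit that all four variants are instances of the common template \textsc{AbstractLocalEC}, so a single argument works. Suppose $V(T)$ is returned. This can only happen in step~2 of some iteration $i$ with $1\le i\le k$, after the DFS rooted at $x$ has \emph{terminated normally} in the graph $G_{i-1}$ obtained from $G=G_0$ by the $i-1$ path reversals of iterations $1,\dots,i-1$. Write $S:=V(T)$. The key structural fact (uniform across the variants, since they differ only in an early-stopping rule that by assumption was not triggered) is that a normal termination means the DFS exhausted every out-edge of every vertex it visited; hence $S$ is exactly the set of vertices reachable from $x$ in $G_{i-1}$, and in particular $|E_{G_{i-1}}(S,V\setminus S)|=0$.

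Next I would dispose of the non-triviality claim $\emptyset\neq S\subsetneq V$. Since the DFS starts at $x$ we always have $x\in S$, so $S\neq\emptyset$. For $S\neq V$: because the DFS terminated without firing its early-stopping rule, the amount it explored in iteration $i$ stayed strictly below that rule's threshold, and the precondition $\nu k=O(|E|)$ (with the explicit constants used in the implementations) keeps the threshold at most $\vol^{\textout}(V)=|E|$; had $S$ equalled $V$ the DFS would have explored all of $V$ and tripped the rule. This needs only a one-line check tailored to each variant's stopping condition (accessed edges for \textsc{Local1}, outgoing volume $\vol^{\textout}(S)$ for \textsc{Local1+}, new edges for \textsc{Local2}, accumulated capacity for \textsc{Local2+}).

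The substantive half is $|E_G(S,V\setminus S)|<k$, which I would get by tracking the cut $|E(\cdot)(S,V\setminus S)|$ backwards through the reversals. Each iteration $j\le i-1$ reverses a simple $x$-to-$y_j$ path $P_j$ inside $G_{j-1}$, and $x\in S$, so \Cref{obs:dfs} applies: $|E_{G_{j-1}}(S,V\setminus S)|=|E_{G_j}(S,V\setminus S)|$ if $y_j\in S$, and $|E_{G_{j-1}}(S,V\setminus S)|=|E_{G_j}(S,V\setminus S)|+1$ if $y_j\notin S$; in either case $|E_{G_{j-1}}(S,V\setminus S)|\le |E_{G_j}(S,V\setminus S)|+1$. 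Telescoping over $j=1,\dots,i-1$ and using $|E_{G_{i-1}}(S,V\setminus S)|=0$ gives
\[
 |E_G(S,V\setminus S)| \;=\; |E_{G_0}(S,V\setminus S)| \;\le\; |E_{G_{i-1}}(S,V\setminus S)| + (i-1) \;=\; i-1 \;\le\; k-1 \;<\; k,
\]
which is exactly \Cref{prop:1}.

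I expect the only real friction to be the bookkeeping behind $S\subsetneq V$, since the four variants use different early-stopping conditions and one must check in each that the precondition $\nu k=O(|E|)$ genuinely prevents the DFS from exhausting all of $V$ without stopping; everything else is the one-line reachability observation together with the telescoping application of \Cref{obs:dfs}, which is routine.
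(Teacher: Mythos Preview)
Your proof is correct and follows the same approach as the paper: both invoke \Cref{obs:dfs} to argue that each of the at most $k-1$ path reversals changes $|E(S,V\setminus S)|$ by at most one, starting from the zero cut witnessed by normal DFS termination in the current graph. Your write-up is in fact more thorough than the paper's own proof, which is a terse three lines and omits the verification of $S\subsetneq V$ altogether.
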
 
\begin{proof}
Let $S = V(T)$ be the cut the algorithm returned. Observe that $x \in S$ by design.
By \Cref{obs:dfs}, each iteration can only reduce the number of crossing edges by at most
one. This can happen at most $k-1$ times before the final iteration, which implies that initially'
$|E(S,V-S)| \leq k-1$.
\end{proof}

For the second property, the following argument works for \textsc{Local1}, and \textsc{Local1+}
\begin{lemma}
 \textsc{Local1} and \textsc{Local1+} satisfy \Cref{prop:2}. 
\end{lemma}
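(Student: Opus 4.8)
The plan is to follow the template of \cite{Forster}: fix a hypothetical local cut $S$, track the single quantity $|E(S, V\setminus S)|$ in the current (path-reversed) working graph, and show that this quantity is forced down to $0$ within the $k$ iterations unless a low-probability event occurs. So, fix a non-empty $S$ with $x \in S$, $\vol^{\textout}(S) \le \nu$ and $|E(S, V\setminus S)| < k$. First I would establish two invariants that persist across all iterations. \emph{(i)} By \Cref{obs:dfs}, reversing the DFS path from $x$ to $y$ leaves $|E(S, V\setminus S)|$ unchanged if $y \in S$ and decreases it by exactly $1$ if $y \notin S$; in particular $|E(S, V\setminus S)|$ is non-increasing and starts at $\le k-1$. \emph{(ii)} A reversal of a path from $x$ to $y$ only lowers $\textdeg^{\textout}(x)$ and only raises $\textdeg^{\textout}(y)$, leaving every other out-degree fixed; since $x \in S$, this makes $\vol^{\textout}(S)$ non-increasing as well, so $\vol^{\textout}(S) \le \nu$ holds throughout.

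Next I would show that returning $\bot$ forces at least one ``bad'' iteration. If $\bot$ is returned, all $k$ DFS runs stopped early. In particular, at the start of iteration $k$ we must have $|E(S, V\setminus S)| \ge 1$: were it $0$, then $x$ could not leave $S$, so the DFS from $x$ would access only the $\le \vol^{\textout}(S) \le \nu < 8\nu k$ edges with tail in $S$, terminate normally, and return $V(T)$. Since $|E(S, V\setminus S)|$ began at $\le k-1$, drops by at most $1$ per iteration, and is $\ge 1$ after the first $k-1$ iterations, in at least one of those iterations the reversed path ended inside $S$. Let $E_i$ be the event that iteration $i$ stops early and its reversed path ends in $S$; we have shown $\bot \Rightarrow \bigcup_{i=1}^{k-1} E_i$, so it remains to bound $\Pr[E_i]$.

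The heart of the proof is the bound $\Pr[E_i] \le 1/(8k)$, and here Local1 and Local1+ run along the same lines; condition on the working graph at the start of iteration $i$. For Local1, an iteration that stops early has accessed a set $E'$ of exactly $8\nu k$ distinct edges and takes $y$ to be the tail of a uniformly random element of $E'$; at most $\vol^{\textout}(S) \le \nu$ of those edges have a tail in $S$, so $\Pr[y \in S] \le \nu/(8\nu k) = 1/(8k)$. For Local1+, writing $v_1 = x, v_2, \dots$ for the DFS visiting order, we have $y = v_j$ exactly when $\tau \in \bigl(\sum_{\ell<j}\textdeg^{\textout}(v_\ell),\ \sum_{\ell\le j}\textdeg^{\textout}(v_\ell)\bigr]$, an interval of length $\textdeg^{\textout}(v_j)$; these intervals are disjoint and, over the visited $v_j \in S$, contribute total length $\le \vol^{\textout}(S) \le \nu$ within $[1,8\nu k]$, so the probability that $\tau$ makes $y \in S$ (which in particular entails stopping early) is again $\le \nu/(8\nu k) = 1/(8k)$. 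The final iteration fixes $\tau = 8\nu k$, but it is irrelevant since no reversal occurs there. Taking expectations over the conditioning and union-bounding, $\Pr[\bot] \le \sum_{i=1}^{k-1}\Pr[E_i] \le (k-1)/(8k) < 1/2$.

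The one delicate point is packaging the randomness correctly: the per-iteration estimate should be stated for the joint event $E_i$ (``stops early \emph{and} ends in $S$'') rather than conditionally on stopping early, because after conditioning on a reachable component of small out-volume the ratio could be large, whereas as a subset of the $\tau$-space of iteration $i$ (given the history) the event $E_i$ has probability cleanly $\le 1/(8k)$. Everything else is the bookkeeping above, so I expect no genuine obstacle.
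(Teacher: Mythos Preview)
Your proposal is correct and follows essentially the same approach as the paper: bound the per-iteration probability that the reversed path ends in $S$ by $1/(8k)$ using that at most $\vol^{\textout}(S)\le\nu$ of the $8\nu k$ ``edge slots'' fall inside $S$, then union-bound over the $k-1$ iterations. The paper phrases the last step as Markov's inequality on $Y=\sum_i \indicator[y_i\in S]$ (which of course coincides with the union bound here) and is terser about Local1+ and about the invariant $\vol^{\textout}(S)\le\nu$ persisting under path reversals; your treatment of those points is more explicit but not a different argument.
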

\begin{proof} 
We focus on proving that LOCAL1 satisfies \Cref{prop:2} (the proof for Local1+ will be essentially
identical). If the algorithm terminates before the $k$-th iteration, then it outputs $V(T)$, and
thus $\bot$ is never returned. So now we assume that the algorithm terminates at the $k$-th
iteration. Let $y_1, \ldots y_{k-1}$ be the sequence of chosen path endpoints $y$ in DFS iterations.
We first bound the probability that $y_i \in S$. Let $\vol^{\textout}_i(S)$ be the volume of $S$ at
iteration $i$. So,
\begin{equation} \label{eq:prob y in S}
\qquad \qquad \qquad \quad \Pr(y_i \in S) \leq \frac{\vol^{\textout}_i(S)}{8\nu k} \leq
\frac{\vol^{\textout}(S)}{8\nu k} \leq \frac{\nu}{8\nu k} = \frac{1}{8k}.
\end{equation} 

The first inequality follows by design. The second inequality follows by \Cref{obs:dfs}. 

By \Cref{obs:dfs}, the algorithm can only return $\bot$ at the final iteration if at least one
of the $y_i$'s is in $S$ (or if there is not viable cut). Let $\indicator[y_i \in S]$ be an indicator
function. Let $Y = \sum_{i
\leq k-1} \indicator[y_i \in S]$. Observe that $Y \geq 1$ if and only if the algorithm outputs
$\bot$. We now bound the probability that $Y \geq 1$. By linearity of expectation, we have $\Ex[Y] =
\sum_{i \leq k-1}\Ex[\indicator[y_i \in S]] = \sum_{i \leq k-1} \Pr(y_i \in S) \leq \frac{1}{8}.$
Therefore, by Markov's inequality, we have
\begin{equation}\label{eq:prob Y geq 1}
$$ \Pr(Y \geq 1) = \Pr(Y \geq 8\cdot\frac{1}{8}) \leq \Pr(Y \geq 8\Ex[Y]) \leq \frac{1}{8}.$$
\end{equation}
This completes the proof for \textsc{LOCAL1}. To see that the same proof works for \textsc{LOCAL1+},
observe that the proof above (\Cref{eq:prob y in S} in particular) does not use the identity of the
edges. Outgoing edges of a vertex are interchangible. The degree counting variant counts edges
ensures that each outgoing edge for visited vertices is included in the collection of edges without
collecting explicitly. The precomputed random number $\tau$ corresponds to a random edge from
the collection.
\end{proof}

It remains to prove the second property for \textsc{Local2} and \textsc{Local2+}. However, the
arguments for \textsc{Local2} and \textsc{Local2+} are very similar to Local1 and Local1+:

\begin{lemma}
\textsc{Local2} and  \textsc{Local2+} satisfy \Cref{prop:2}. 
\end{lemma}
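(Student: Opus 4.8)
The plan is to mirror the structure of the \textsc{Local1}/\textsc{Local1+} argument, adapting the probability bound to reflect that \textsc{Local2} samples among \emph{new} edges (at most $8\nu$ per iteration) rather than all accessed edges ($8\nu k$). As before, if the algorithm halts before the $k$-th iteration it returns $V(T)$, so $\bot$ is never output; hence assume it runs all $k$ iterations. Let $S$ be a vertex-set satisfying \Cref{eq:local cut exists}, and let $y_1,\dots,y_{k-1}$ be the path endpoints chosen in the first $k-1$ iterations. By \Cref{obs:dfs}, reversing the $x$-to-$y_i$ path decreases $|E(S,V\setminus S)|$ only when $y_i\notin S$, and since initially $|E(S,V\setminus S)|<k$, the algorithm can reach the final iteration with a viable cut still present only if at least one $y_i\in S$; so it suffices to show $\Pr(\exists i:\ y_i\in S)\le 1/2$.

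The key step is bounding $\Pr(y_i\in S)$ for a fixed iteration $i$. Here I would argue that the new edges collected in iteration $i$ that originate inside $S$ number at most $\vol^{\textout}(S)\le\nu$ \emph{in total across all iterations}: once an outgoing edge of a vertex in $S$ is marked new and counted, it becomes old and is never recounted, and every such edge lies in $E(S,V)$ whose size is $\vol^{\textout}(S)$. In iteration $i$ the algorithm picks the $\tau$-th new edge for a uniform $\tau\in[1,8\nu]$, and $y_i\in S$ only if that edge originates in $S$; since at most $\vol^{\textout}_i(S)\le\vol^{\textout}(S)\le\nu$ of the (at most $8\nu$) candidate positions correspond to such edges, $\Pr(y_i\in S)\le \nu/(8\nu)=1/8$. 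Wait — that gives $\sum_i\Pr(y_i\in S)\le (k-1)/8$, which is too weak. The correct accounting is that the new edges originating in $S$ are partitioned across iterations, so if $a_i$ denotes the number of new edges from $S$ collected in iteration $i$, then $\Pr(y_i\in S)\le a_i/(8\nu)$ and $\sum_i a_i\le \vol^{\textout}(S)\le\nu$, giving $\sum_{i\le k-1}\Pr(y_i\in S)\le \nu/(8\nu)=1/8$ — matching the \textsc{Local1} bound. This is the main obstacle: one must use the global budget on new edges from $S$ rather than a per-iteration bound, since a naive union bound over $k$ iterations loses the factor $k$.

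With $Y=\sum_{i\le k-1}\indicator[y_i\in S]$ and $\Ex[Y]=\sum_{i\le k-1}\Pr(y_i\in S)\le 1/8$, Markov's inequality gives $\Pr(Y\ge 1)=\Pr(Y\ge 8\Ex[Y])\le 1/8\le 1/2$, exactly as in \Cref{eq:prob Y geq 1}. For \textsc{Local2+}, I would observe that the degree-counting bookkeeping — zeroing $c(v_j)$ for $j<i$ and carrying the remainder on $c(v_i)$ — precisely simulates collecting each uncounted outgoing edge of each newly exhausted vertex exactly once, so the collection of "new" edge-slots is identical in distribution to that of \textsc{Local2}, and the same $\tau\in[1,8\nu]$ selects a uniformly random one; since the argument above never referenced edge identities, only the count of collected edges originating in $S$, it transfers verbatim. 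I expect the only subtlety in the \textsc{Local2+} case is checking that a vertex in $S$ cannot contribute more than its out-degree worth of slots to the running sum (true, because capacity only decreases and starts at $\textdeg^{\textout}(v)$), which keeps $\sum_i a_i\le\vol^{\textout}(S)$ intact.
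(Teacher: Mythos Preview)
Your proposal is correct and follows essentially the same approach as the paper. The paper phrases the key step edge-centrically (``each edge in $E(S,V)$ has a $\tfrac{1}{8\nu}$ probability to be chosen if the edge is visited,'' hence $\Ex[Y]\le |E(S,V)|/(8\nu)\le 1/8$), while you phrase it iteration-centrically via $\sum_i a_i\le \vol^{\textout}(S)\le\nu$ with $\Pr(y_i\in S)\le a_i/(8\nu)$; these are just two orderings of the same double sum, and both conclude by Markov exactly as in \Cref{eq:prob Y geq 1}. Your treatment of \textsc{Local2+} likewise matches the paper's one-line ``apply the same logic to $c(v)$ instead of edges.''
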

\begin{proof}
For \textsc{Local2}, each edge in $E(S, V)$ has a $\frac{1}{8\nu}$ probability to be chosen if the edge
is visited. The probabilities are not independent but can be used for Markov's inequality. If $Y$ is the
number of edges in $E(S, V)$ that are chosen, or equivalently the number of times a vertex in
$S$ is chosen, we have $\Ex[Y] \leq \frac{\nu}{8\nu} = \frac{1}{8}$, resulting in the same equation
as \Cref{eq:prob Y geq 1}. If we consider the case where all edges in $E(S, V)$ are visited in a single
iteration, we can see that the bound is tight. For \textsc{Local2+}, apply the same logic to $c(v)$ instead of edges.
\end{proof}

\section{Full Near-Linear Vertex Connectivity Algorithm} \label{appendix:fullocalvc}

\subsection{Vertex Connectivity via Local Edge Connectivity in Undirected Graphs}

In this section, we describe the vertex connectivity algorithm that we implement in this paper.
We will assume that we have a LocalEC algorithm (\Cref{def:localec}) with time complexity $O(\nu k^2)$.

Let $G$ be a directed graph with n vertices and m edges, such that $(x, y) \in E(G) \iff (y, x) \in
E(G)$. This is a directed representation of an undirected graph. Given a positive integer $k$, the
following algorithm, which is very closely based on the framework by Nanongkai et al.
\cite{Nanongkai}, finds a minimum vertex cut of size less than $k$ or certifies that $\kappa \geq k$
with constant probability. Let $k'$ be the
size of the minimum cut found so far in the algorithm, or $k$ if no cut has been found yet.

Suppose that there is a vertex cut in $G$, represented by a separation triple $(L,S,R)$.
Assume without loss of generality that
$\text{vol}^\text{out}(L)\leq\text{vol}^\text{out}(R)$. If $\text{vol}^\text{out}(L) < 2\delta$,
where $\delta$ is the minimum degree, then $|L| = 1$. We find $\delta$ and such trivial cuts with a
linear sweep.

Fix some value $a = \Theta(m/k)$, which must be a valid value for the parameter $\nu$ in LocalEC.

\textbf{Balanced Cut.} Suppose that $\text{vol}^\text{out}(L) \geq a$. If we sample pairs of edges
$(x, x'), (y, y') \in E(G)$ we can show that $x \in L, y \in R$ with probability $\Theta(a/m)$ for
each sample. We can find a x-y vertex cut of size less than $k'$ if one exists by using a max flow
algorithm on the split graph through a well-known reduction (e.g. \cite{Splitgraph}). A sample size
of $\Theta(m/a)$ is sufficient to find such a cut with high probability.

\textbf{Unbalanced Cut.} Now, for $\nu \in \{2^i\delta | i \in \mathbb{Z}^{\geq0}, 2^i\delta < a\}$,
i.e., power of two multiples of $\delta$ up to $a$.
we sample $\Theta(m/\nu)$ edges $(x, x') \in E(G)$ and run $\text{LocalEC}(x_\text{out}, \nu, k')$ on the
split graph for each $x$. If $\text{vol}^\text{out}(L) = \Theta(\nu)$, the probability that any
given edge yields $x \in L$ is $ \Theta(\nu/m)$, which means that a sample size of $\Theta(m/\nu)$
is sufficient to find one with high probability. Let $L' = \{x_\text{in}, x_\text{out} | x \in L\}
\cup \{x_\text{in} | x \in S\}$. $L'$ is one side of an edge cut that corresponds to the vertex cut $S$, as in
the reduction used for x-y connectivity for balanced cuts. We can show that $\text{vol}^\text{out}(L') =
\frac{k+1}{k}\text{vol}^\text{out}(L) + k = \Theta(\text{vol}^\text{out}(L))$. Clearly, if
$\text{vol}^\text{out}(L) = o(a)$, we will run LocalEC with some value $\nu$ for a sufficient sample
size to find the cut with high probability.

In practice, if $\text{vol}^\text{out}(L) = \Theta(a)$, there is a fairly high probability to find
the cut both with the max flow algorithm and LocalEC. At $\frac{a}{2}$, the max flow algorithm finds the
cut at approximately half the probability at $a$. LocalEC, when configured to find cuts with
reasonably high probability at $\nu$ will also often find cuts at higher volumes with diminishing probability
as the actual volume goes up.

If we do not start with some $k > \kappa$, we can find one by doubling $k$ until we find a cut. When
a cut can be found, a minimum cut will be find with high probability.

\textbf{Time Complexity.} Assuming Ford-Fulkerson max flow that runs in $\Theta(mk)$ time, the
running time for finding balanced cuts is $\Theta(mk)\Theta(m/(m/k)) = \Theta(mk^2)$. Assuming
LocalEC that runs in $O(\nu k^2)$ time, the running time for each of the $\Theta(\log (m/k))$ values
for the parameter $\nu$ is $\Theta(\nu k^2)\Theta(m/\nu) = \Theta(mk^2)$. Due to preprocessing by
Nagamochi and Ibaraki, which runs in $\Theta(m)$ time, we have $m = \Theta(nk)$, for a final time
complexity of $\Theta(m + k^3n\log n)$. If we repeat for high rather than constant probability we
square the logfactor.

\subsection{Implementation Details.}

We use the following numbers for the unspecified values above: $a = \frac{m}{3k}$, $\frac{m}{a} =
3k$ samples for Ford-Fulkerson and $\lfloor\frac{m}{\nu}\rfloor$ for LocalEC. For Local1 and Local1+
we collect/count $2 \nu k$ edges rather than $8 \nu k$ and for Local2+ we count to $3 \nu$ rather
than $8 \nu$. Local2+ seems to need a slightly higher factor for similar success rate.

The graph implementation used for this paper is based on adjacency lists with c++ vectors. When we
reverse edges along a path we save the relevant vector indices to enable us to perform the opposite
operations later, in order from the newest reversed path to the oldest. We store information such as
DFS visited vertex flags and the number of uncounted edges/coins in LOCAL2+ per vertex. To avoid
resetting this information for every vertex, we also maintain lists of vertices that have been
visited within the most recent DFS or LocalEC call.

\section{Preflow-push based Vertex Connectivity Algorithm} \label{appendix:hrg}

We use the algorithm by Henzinger, Rao and Gabow \cite{HRG} with only minor optimisations. We omit
most details here. The core algorithm uses a preflow based algorithm to calculate the minimum $S_i
x_i$-cut, where $S_i = \{x\} \cup \{x_j : j < i\}$, for each vertex $x_i$ not adjacent to x. The
algorithm maintains an ``awake'' set $W$ of vertices from where the current sink may be reachable.
If there exists a minimum vertex cut $S \ni x$, which is very probable for small $\kappa$, then the
minimum of these cuts will be a minimum vertex cut. The algorithm is repeated if needed to achieve
a 50\% or lower error rate, which should not be the case for any included test case.
As with the algorithms by Forster et al.
\cite{Forster}, we use the spit graph reduction and the sparsification algorithm by Nagamochi and
Ibaraki \cite{sparsification} to reduce the average degree of the graph to at most $k$, doubling $k$
until we find a cut smaller than $k$. In case of weighted edges, dynamic trees would be used to
improve time complexity, but this article only uses unweighted edges.

On page 10 of \cite{HRG}, Henzinger et al. describe a guaranteed method of doubling $k$ to find some
$k \in (\kappa, 4\kappa)$. There, the algorithm is run on an arbitrary nonrandom vertex of degree
$k$. To obtain an optimal cut with any probability guarantee, the algorithm needs to be repeated on a random
seed vertex. We use random seed vertices during doubling to avoid having to repeat the algorithm
after already finding a cut of size less than $k$. For small $k$, the ``bad case'' of not finding a
cut despite $\kappa < k$ is highly unlikely.

On page 20 of \cite{HRG}, Henzinger et al. describe multiple auxiliary data structures used to
achieve the desired time complexity. One of these is a partition of vertices in the awake set $W$ by
their current distance values. We add another auxiliary data structure that stores the index of a
vertex in this data structure to speed up finding and removing a vertex, which happened frequently
enough to create a CPU hotspot.

\end{document}